\documentclass[fleqn,10pt]{wlscirep}
\usepackage[utf8]{inputenc}
\usepackage[T1]{fontenc}

\usepackage{amsthm}
\usepackage{amsmath,amsfonts,amssymb,amsthm,version}
\usepackage{mathrsfs,fancybox,pifont}
\usepackage{graphicx}
\usepackage{url,hyperref}
\usepackage{color}
\usepackage{subfigure,multirow}
\usepackage{epstopdf}
\usepackage{cases}
\usepackage{mathtools}
\usepackage{algorithm,algorithmic}
\usepackage{authblk}
\usepackage{fancyhdr}
\usepackage{lipsum}
\usepackage{cite}
\usepackage{float}
\usepackage{lineno}

\theoremstyle{plain}
\newtheorem{theorem}{Theorem}

\title{Emergent spatial organization of competing species under environmental stress and cooperation}

\author{Ton Viet Ta}
\affil{Graduate School of Bioresource and Bioenvironmental Sciences, Kyushu University\\
744 Motooka, Nishi Ward, Fukuoka 819-0395, Japan}

\begin{abstract}
Understanding how species persist and coexist under interacting environmental stressors is a central challenge in ecology and global change biology. Here, we develop a spatially explicit reaction--diffusion framework to investigate the coupled eco--environmental dynamics of competing species in heterogeneous landscapes shaped by climate variability, anthropogenic pollution, resource heterogeneity, and cooperative interactions. Environmental temperature evolves through low-frequency oscillations resembling El Niño--Southern Oscillation dynamics, while pollution and resources diffuse from localized sources. Species growth is governed by density-dependent competition constrained by a dynamically evolving carrying capacity that integrates abiotic stressors with an endogenous cooperation field, which itself follows reaction--diffusion dynamics and is degraded by pollution.

Numerical simulations reveal the spontaneous emergence of persistent spatial organization, including dominance segregation, stable competitive boundaries, and asymmetric clustering between species. Quantitative spatial analyses based on boundary geometry, fractal dimension, spatial entropy, and dominant-area dynamics demonstrate a transition from highly intermixed initial conditions to low-complexity, quasi-stationary spatial configurations. Coexistence emerges through distinct spatial strategies, with one species occupying a larger fraction of space while the other persists via higher local densities. Cooperation enhances resilience to environmental stress but collapses locally in highly polluted regions, generating strong spatial heterogeneity in social buffering.

Beyond forward simulation, we introduce a hybrid inverse modeling framework that combines numerical simulation with deep learning to infer high-dimensional model parameters from only two spatial population snapshots separated in time. Using a Swin Transformer architecture trained on synthetically generated data, we accurately recover key demographic, diffusive, and environmental-sensitivity parameters and achieve reliable short-term spatial predictions, while longer-term forecasts diverge due to the intrinsic sensitivity of nonlinear spatial systems. Together, our results establish a unified framework linking sparse ecological observations to mechanistic spatiotemporal dynamics, with implications for forecasting biodiversity responses under accelerating environmental change.
\end{abstract}

\begin{document}


\flushbottom
\maketitle

\section{Introduction} \label{introduction}

Understanding how ecological communities persist and reorganize under interacting environmental stressors is a central challenge in contemporary ecology. Accelerating climate change, pervasive pollution, and spatially heterogeneous resource distributions are reshaping ecosystems worldwide, often in nonlinear and synergistic ways that defy intuition \cite{IPCC2021,Pecl2017}. At the same time, many organisms exhibit behavioral and social responses---such as cooperation, aggregation, or collective defense---that can substantially modify population-level outcomes, particularly under environmental stress \cite{Nowak2006,West2007}. Predicting ecological resilience or collapse therefore requires theoretical frameworks capable of integrating abiotic forcing, biotic interactions, and adaptive behavior across space and time.

Spatially explicit reaction--diffusion models have long served as a powerful tool for studying population spread, competition, and pattern formation in heterogeneous landscapes. Such models have revealed mechanisms underlying range expansion, competitive exclusion, coexistence, and self-organized spatial structure \cite{okubo2001diffusion,cantrell2004spatial,smoller}. However, most existing approaches treat environmental drivers independently or impose them as static backgrounds, while behavioral responses are often neglected or prescribed phenomenologically. In reality, climate variability, pollution, resource availability, and cooperation are tightly coupled: environmental stress can suppress cooperative behavior, cooperation can buffer populations against stress, and both feedback into demographic dynamics in spatially structured ways \cite{IPCC2021,Pecl2017,Nowak2006,West2007}.

Here, we develop a unified eco--environmental modeling framework that explicitly couples population dynamics, environmental variability, pollution, resources, and cooperation within a spatially extended system. We consider multiple competing species evolving on a two-dimensional domain, where local population growth is governed by density-dependent competition and a dynamically evolving carrying capacity. This carrying capacity integrates (i) spatiotemporal temperature variability inspired by El Niño--Southern Oscillation dynamics, (ii) diffusive pollution originating from localized sources, (iii) spatially heterogeneous resource inputs, and (iv) an endogenous cooperation field that evolves via its own reaction--diffusion dynamics and is negatively impacted by pollution exposure. By embedding cooperation as an evolving environmental field rather than a fixed trait, the model captures feedbacks between social behavior and environmental stress at the landscape scale.

This integrated formulation allows us to investigate how interacting stressors and behavioral responses shape emergent spatial organization and long-term coexistence. Through numerical simulations, we demonstrate the spontaneous emergence of stable spatial patterns, including dominance segregation, persistent competitive boundaries, and asymmetric coexistence strategies. While one species may dominate a larger fraction of space, another can persist through higher local densities within compact regions. We quantify these dynamics using a suite of spatial metrics, including dominance maps, boundary length, fractal dimension of competitive interfaces, dominant-area dynamics, spatial entropy, and mean population density. Together, these measures reveal a robust transition from highly intermixed initial conditions to low-complexity, quasi-stationary spatial configurations.

Beyond forward simulation, a major practical obstacle in applying mechanistic spatial models to real ecosystems is parameter inference. Classical statistical approaches, such as maximum likelihood or Bayesian methods, typically require extensive time-series data and strong distributional assumptions that are rarely satisfied in ecological settings. In contrast, many real-world monitoring programs provide only sparse spatial snapshots separated in time. To address this limitation, we introduce a hybrid inverse modeling framework that combines numerical simulation with deep learning to infer high-dimensional model parameters from only two spatial observations. Using a Swin Transformer architecture trained on synthetically generated data, we accurately recover key demographic, diffusive, and environmental-sensitivity parameters and achieve reliable short-term spatial predictions. Deviations at longer time horizons highlight the intrinsic sensitivity of nonlinear reaction--diffusion systems to parameter uncertainty.

Together, our results establish a unified framework that links mechanistic spatiotemporal modeling with data-driven parameter inference under extreme data sparsity. By integrating environmental forcing, behavioral cooperation, spatial pattern formation, and inverse learning within a single pipeline, this work provides new tools for forecasting biodiversity dynamics and ecosystem resilience in the face of accelerating global change.


\section{Model Description}  \label{ModelDescription}
We model the spatiotemporal dynamics of $S$ interacting species across a two-dimensional bounded spatial domain $\Omega \subset \mathbb{R}^2$ with smooth boundary $\partial \Omega$, evolving over time $t \ge 0$. Species dynamics are driven by intrinsic biological processes and shaped by environmental stressors, including resource availability, pollution, temperature fluctuations (e.g., climate anomalies), and intraspecific cooperation. Our objective is not to reproduce any specific ecosystem in full mechanistic detail, but to introduce a minimal yet extensible spatiotemporal framework that captures how multiple environmental stressors jointly modulate species persistence, spread, and competition.

\subsection{Population Dynamics}
Let $N_i(x,y,t)$ denote the population density of species $i \in \{1,\dots,S\}$ at spatial location $(x,y)\in\Omega$ and time $t\ge0$. Each species follows a spatially extended logistic growth model with interspecific competition and no-flux (Neumann) boundary conditions:
\begin{equation} \label{population}
\begin{cases}
\begin{aligned}
&\frac{\partial N_i}{\partial t} = D_i \nabla^2 N_i + r_i N_i \left(1 - \frac{N_i + \sum_{j \ne i} \alpha_{ij} N_j}{K_i(x, y, t)} \right) \quad & \text {on }    \Omega \times (0, \infty),  i=1, \dots, S, \\
&N_i(x,y,0) =N_i^*(x,y)  \geq  0 &  \text { on }   \bar{\Omega},\\
& \frac{\partial N_i}{\partial \mathbf{n}} = 0  & \text{on } \partial \Omega  \times [0, \infty).
\end{aligned}
\end{cases}
\end{equation}
Here $D_i>0$ denotes the diffusion coefficient representing random movement or dispersal of species $i$, while $r_i>0$ is its intrinsic growth rate in the absence of density-dependent effects. The coefficient $\alpha_{ij}>0$ quantifies the competitive pressure exerted by species $j$ on species $i$, with $\alpha_{ii}=1$ by convention. The function $K_i(x,y,t)>0$ is the local environmental carrying capacity shaped by environmental conditions, $N_i^*\in C^2(\bar{\Omega})$ is the prescribed initial population density, and $\mathbf{n}$ denotes the outward unit normal on $\partial\Omega$.

The diffusion term captures spatial redistribution due to passive movement, while the nonlinear growth term describes logistic population growth moderated by both intra- and interspecific competition. We assume that environmental limitations constrain total local biomass, so that both forms of competition are scaled by the same carrying capacity. When local densities exceed $K_i$, population growth becomes negative, representing overexploitation or environmental stress.

\subsection{Environmental Carrying Capacity}
We model the carrying capacity $K_i(x,y,t)$  as a dynamic function of four environmental fields: temperature $T(x,t)$, pollution concentration $P(x,y,t)$, environmental resource availability $R(x,y,t)$, and intraspecific cooperation $C(x,y,t)$. These factors interact multiplicatively to define habitat suitability:
\begin{equation} \label{capacity}
K_i(x,y,t)=K_0^{(i)}
e^{-\gamma_T^{(i)}|T(x,t)-T_i^*|}
e^{-\gamma_P^{(i)}P(x,y,t)}
\left[1+\gamma_R^{(i)}R(x,y,t)\right]
\left[1+\gamma_C^{(i)}C(x,y,t)\right].
\end{equation}
Here $K_0^{(i)}>0$ denotes the maximum potential carrying capacity for species $i$ in the absence of resource enrichment and cooperation. The parameter $T_i^*>0$ is the optimal temperature for species $i$, while $\gamma_T^{(i)}>0$ and $\gamma_P^{(i)}>0$ quantify sensitivities to temperature deviation and pollution, respectively. The coefficients $\gamma_R^{(i)}$ and $\gamma_C^{(i)}>0$ measure the positive effects of resource availability and cooperation. The term $1+\gamma_R^{(i)}R$ reflects the presence of uniformly distributed background resources, ensuring that carrying capacity remains positive even when localized resources vanish.

The multiplicative structure in \eqref{capacity} provides a phenomenological representation of habitat suitability, commonly used in ecological modeling to encode independent stressors while preserving positivity of the carrying capacity.

\subsection{Pollution Field}
Pollution spreads diffusively from fixed spatial sources according to
\begin{equation} \label{pollution}
\begin{cases}
\begin{aligned}
& \frac{\partial P}{\partial t} = D_P \nabla^2 P - \lambda_P P + S_P(x,y)  \quad & \text {on }    \Omega \times (0, \infty),\\
&P(x,y,0) =P_0(x,y)  \geq  0   &\text {on }   \bar{\Omega},\\
& \frac{\partial P}{\partial \mathbf{n}} = 0   &\text{on } \partial \Omega  \times [0, \infty).
\end{aligned}
\end{cases}
\end{equation}
Here $D_P>0$ is the diffusion coefficient of pollution, $\lambda_P>0$ is the natural degradation rate, $P_0\in C^2(\bar{\Omega})$ is the initial pollution distribution, and $S_P(x,y)$ is a spatially heterogeneous source term.

In general, $S_P(x,y)$ may be any nonnegative smooth function. In this study, we consider $M$ fixed-point pollution sources centered at locations $\{{\bf x}_k^{(P)}\}_{k=1}^M \subset  \Omega$, each emitting pollution at rate $A_k>0$, modeled by narrow Gaussian profiles:
\begin{equation} \label{spatial_source}
S_P(x,y)=\sum_{k=1}^M A_k
\exp\!\left(-\frac{\|(x,y)-{\bf x}_k^{(P)}\|^2}{2\sigma_k^2}\right),
\end{equation}
where $\sigma_k>0$ controls the spatial spread of the $k$th source, with smaller values corresponding to more localized, point-like inputs.

\subsection{Resource Field}
For consistency and analytical tractability, we adopt an analogous reaction--diffusion formulation for environmental resources. The resource field $R(x,y,t)$ evolves according to
\begin{equation} \label{resource}
\begin{cases}
\begin{aligned}
&
\frac{\partial R}{\partial t} = D_R \nabla^2 R - \lambda_R R + S_R(x,y)   \quad & \text{ on } \Omega \times (0, \infty),\\
&R(x,y,0) =R_0(x,y)  \geq  0  & \text { on }   \bar{\Omega},\\
& \frac{\partial R}{\partial \mathbf{n}} = 0  & \text{on } \partial \Omega \times [0, \infty).
\end{aligned}
\end{cases}
\end{equation}
Here $D_R>0$ is the resource diffusion coefficient, $\lambda_R>0$ represents natural degradation or consumption, $R_0\in C^2(\bar{\Omega})$ is the initial resource distribution, and $S_R(x,y)$ is a spatially fixed input term.

We model $S_R(x,y)$ as a superposition of $L$ localized resource sources centered at locations $\{{\bf x}_\ell^{(R)}\}_{\ell=1}^L \subset  \Omega$, each supplying input at rate $B_\ell>0$:
\begin{equation} \label{resource_source}
S_R(x,y)=\sum_{\ell=1}^L B_\ell
\exp\!\left(-\frac{\|(x,y)-{\bf x}_\ell^{(R)}\|^2}{2\eta_\ell^2}\right),
\end{equation}
where $\eta_\ell>0$ determines the spatial extent of the $\ell$th source.

\subsection{Temperature Field}
El Niño events predominantly affect sea surface temperatures along the equatorial Pacific in the longitudinal (east--west) direction~\cite{noaa_pmel_elnino}. Motivated by this observation, we assume that temperature variation occurs primarily along the $x$-axis. We represent localized climate anomalies using a spatially concentrated temperature perturbation that oscillates periodically in time:
\begin{equation} \label{temperature}
T(x,y,t)=T_{\mathrm{base}}
+ A_T \exp\!\left(-\frac{(x-x_0)^2}{2\sigma_T^2}\right)
\sin\!\left(2\pi\frac{t}{\tau_T}\right).
\end{equation}
Here $T_{\mathrm{base}}>0$ is the baseline temperature, $A_T>0$ is the anomaly amplitude, $x_0$ denotes the spatial center of the anomaly, $\sigma_T>0$ controls its spatial spread, and $\tau_T>0$ is the oscillation period, typically on the order of 3--7 years. While motivated by El Niño dynamics, this formulation more generally serves as a prototype for periodic, spatially localized climate forcing.

\subsection{Cooperation Field}
We treat cooperation as an effective environmental variable summarizing emergent social or behavioral interactions, rather than explicitly modeling individual-level game dynamics. The cooperation field $C(x,y,t)$ represents the local average degree of cooperative behavior and evolves as 
\begin{equation} \label{cooperation}
\begin{cases}
\begin{aligned}
&\frac{\partial C}{\partial t} = D_{C} \nabla^2 C+ \alpha_{C} (1 - C) - \beta_{C} P(x, y, t) C  \quad  & \text{ on } \Omega \times (0, \infty),\\
&C(x,y,0) =C^*(x,y)  \geq  0  &\text { on }   \bar{\Omega},\\
&\frac{\partial C}{\partial \mathbf{n}} = 0  & \text{on } \partial \Omega \times [0, \infty).
\end{aligned}
\end{cases}
\end{equation}
Here $D_C>0$ is the cooperation diffusion coefficient, $\alpha_C>0$ is the intrinsic rate at which cooperation increases in the absence of stress, $\beta_C>0$ quantifies the suppressive effect of pollution, and $C^*\in C^2(\bar{\Omega})$ is the initial cooperation field. In the absence of pollution, cooperation tends toward a normalized maximum value $C=1$, reflecting evolutionary or social incentives such as resource sharing, collective defense, or coordinated movementhe t.

\section{Mathematical Analysis} \label{result}

\begin{theorem}  \label{existence}
The pollution equation \eqref{pollution}, resource equation \eqref{resource},
and cooperation equation \eqref{cooperation} admit unique global classical
solutions
$$
P(\cdot,\cdot,\cdot),\; R(\cdot,\cdot,\cdot),\; C(\cdot,\cdot,\cdot)
\in C^{2,1}(\bar\Omega\times[0,\infty)).
$$
Moreover, these solutions are nonnegative and uniformly bounded in space and time:
there exist constants $M_P,M_R,M_C>0$ such that
$$
0 \leq P(x,y,t) \leq M_P,\quad
0 \leq R(x,y,t) \leq M_R,\quad
0 \leq C(x,y,t) \leq M_C
\quad \text{for all } (x,y)\in\Omega,\ t\ge0.
$$
Consequently, the carrying capacities $K_i(x,y,t)$ defined in
\eqref{capacity} satisfy
$$
0 < K_{\min} \leq K_i(x,y,t) \leq K_{\max} < \infty
\quad \text{for all } (x,y,t)\in \Omega\times[0,\infty),\ i=1,\dots,S,
$$
for some constants $K_{\min},K_{\max}$ depending only on model parameters.
\end{theorem}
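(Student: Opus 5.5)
Since \eqref{pollution} and \eqref{resource} are linear, decoupled from everything else, and have time-independent smooth sources, I treat them first. I then handle \eqref{cooperation}, which is linear in $C$ once $P$ is known. The bounds on $K_i$ follow directly from the bounds on the fields.

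\textbf{Step 1: $P$ and $R$.} Equation \eqref{pollution} is a linear parabolic problem with constant coefficients, a smooth bounded source $S_P\ge0$ given by \eqref{spatial_source}, homogeneous Neumann data and $P_0\in C^2(\bar\Omega)$. Standard linear Schauder theory (e.g.\ Ladyzhenskaya--Solonnikov--Ural'tseva, or the analytic Neumann heat semigroup) gives a unique classical solution $P\in C^{2,1}(\bar\Omega\times(0,\infty))$ on every finite interval $[0,T]$, hence globally, continuous up to $t=0$.

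For the bounds I use the parabolic comparison principle with Neumann boundary conditions. Since $S_P\ge0$ and $P_0\ge0$, the function $0$ is a subsolution, so $P\ge0$. For the upper bound, the constant $\bar P:=\max\{\|P_0\|_\infty,\|S_P\|_\infty/\lambda_P\}$ satisfies $-\lambda_P\bar P+S_P\le0$, so it is a supersolution and $P\le M_P:=\bar P$. Note that $\|S_P\|_\infty\le\sum_k A_k$. The same argument gives $0\le R\le M_R:=\max\{\|R_0\|_\infty,\sum_\ell B_\ell/\lambda_R\}$.

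\textbf{Step 2: $C$.} With $P$ now known, \eqref{cooperation} reads $C_t=D_C\nabla^2C-(\alpha_C+\beta_C P)C+\alpha_C$. This is a linear equation whose zeroth-order coefficient $c(x,y,t)=\alpha_C+\beta_C P$ is bounded and H\"older continuous, because $P\in C^{2,1}$. Linear theory again gives a unique global classical solution.

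For the bounds, $0$ is a subsolution because the forcing $\alpha_C>0$, so $C\ge0$. The constant $\bar C:=\max\{1,\|C^*\|_\infty\}$ is a supersolution because
\[
\alpha_C(1-\bar C)-\beta_C P\bar C\le0,
\]
which uses $P\ge0$ from Step 1. Hence $C\le M_C:=\bar C$.

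There is a regularity subtlety at $t=0$. Joint $C^{2,1}$ regularity up to $t=0$ would need compatibility conditions such as $\partial_{\mathbf n}P_0=0$. I therefore read the claimed $C^{2,1}(\bar\Omega\times[0,\infty))$ as meaning classical for $t>0$ and continuous at $t=0$, unless compatibility of the initial data is assumed. I expect this point, together with quoting the correct Schauder/Neumann existence theorem, to be the main technical obstacle. The bounds themselves are elementary.

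\textbf{Step 3: carrying capacity.} Writing $T^*_i$ for the optimal temperature, the temperature field \eqref{temperature} satisfies
\[
|T-T^*_i|\le |T_{\mathrm{base}}-T_i^*|+A_T.
\]
Inserting this and the bounds from Steps 1--2 into \eqref{capacity} gives
\[
K_i\ge K_0^{(i)}e^{-\gamma_T^{(i)}(|T_{\mathrm{base}}-T_i^*|+A_T)}e^{-\gamma_P^{(i)}M_P}.
\]
This uses $1+\gamma_R^{(i)}R\ge1$ and $1+\gamma_C^{(i)}C\ge1$, which hold because $R,C\ge0$ and $\gamma_R^{(i)}\ge0$. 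It also gives
\[
K_i\le K_0^{(i)}(1+\gamma_R^{(i)}M_R)(1+\gamma_C^{(i)}M_C).
\]
Taking the minimum and maximum over $i=1,\dots,S$ yields $K_{\min}>0$ and $K_{\max}<\infty$. These depend only on the parameters and on the sup norms of the initial data.
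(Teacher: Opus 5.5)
Your proposal is correct and follows essentially the same route as the paper: linear parabolic theory for existence (first $P$ and $R$, then $C$ with $P$ as a known coefficient), and the maximum principle for the bounds $\max\{\|P_0\|_\infty,\max S_P/\lambda_P\}$, $\max\{\|R_0\|_\infty,\max S_R/\lambda_R\}$ and $\max\{\|C^*\|_\infty,1\}$, followed by direct substitution into \eqref{capacity}. Your constant super/subsolutions are the rigorous form of the paper's differential inequality for the spatial maximum, and two of your remarks are valid refinements that the paper glosses over: that $C^{2,1}$ regularity up to $t=0$ needs the compatibility condition $\partial_{\mathbf n}P_0=0$, and that $K_{\min},K_{\max}$ also depend on the sup norms of the initial data.
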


\begin{proof}
Step 1: Pollution field.

The pollution equation \eqref{pollution} is linear, uniformly parabolic with
smooth coefficients, nonnegative source term $S_P\in C^\infty(\bar\Omega)$,
and Neumann boundary conditions. By standard parabolic theory
\cite{evans2010,lieberman1996}, there exists a unique global classical solution
$$
P\in C^{2,1}(\bar\Omega\times[0,\infty)).
$$

To obtain a uniform bound, let
$$
\overline P(t)=\max_{(x,y)\in\bar\Omega} P(x,y,t).
$$
At a point where $\overline P(t)$ is attained, the maximum principle yields
$$
\frac{d\overline P}{dt}
\le -\lambda_P \overline P + \max_{\bar\Omega} S_P.
$$
Solving this scalar inequality gives
$$
\overline P(t)
\le \max\!\left\{ \|P_0\|_{L^\infty(\Omega)},
\frac{\max_{\bar\Omega} S_P}{\lambda_P} \right\}
=: M_P.
$$

Since the reaction term $-\lambda_P P+ S_P$ is positive at $P=0$,  the nonnegative cone is invariant.  The maximum principle  implies 
$$
P(x,y,t) \geq 0 \quad \text{for all } (x,y) \in \Omega,\ t \geq 0.
$$

Step 2: Resource field.

The resource equation \eqref{resource} is also linear and uniformly parabolic.
The same argument yields a unique global classical solution
$$
R\in C^{2,1}(\bar\Omega\times[0,\infty)),
$$
with uniform bound
$$
0 \le R(x,y,t)
\le \max\!\left\{ \|R_0\|_{L^\infty(\Omega)},
\frac{\max_{\bar\Omega} S_R}{\lambda_R} \right\}
=: M_R.
$$

Step 3: Cooperation field.

Once $P(x,y,t)$ is obtained, the cooperation equation \eqref{cooperation} becomes
a linear parabolic equation in $C$ with smooth space--time bounded coefficients.
Thus, it admits a unique global classical solution
$$
C\in C^{2,1}(\bar\Omega\times[0,\infty)).
$$
To show nonnegativity, observe that at $C=0$ the reaction term satisfies
$$
\alpha_C(1-C)-\beta_C P C = \alpha_C >0,
$$
so the nonnegative cone is invariant under the flow; hence $C\ge0$.

To establish an upper bound, let
$$
\overline C(t)=\max_{(x,y)\in\bar\Omega} C(x,y,t).
$$
At a point of maximum, we obtain
$$
\frac{d\overline C}{dt}
\le \alpha_C(1-\overline C),
$$
which implies
$$
\overline C(t)\le \max\{\|C^*\|_{L^\infty(\Omega)},1\}
=: M_C.
$$

Step 4: Boundedness of carrying capacities.

The temperature field \eqref{temperature} is explicitly bounded:
$$
T_{\text{base}}-A_T \le T(x,t)\le T_{\text{base}}+A_T.
$$
Combining this with the uniform bounds on $P,R,C$ yields
$$
0<K_{\min}\le K_i(x,y,t)\le K_{\max}<\infty,
$$
where the constants depend only on model parameters. The proof is complete.
\end{proof}

\begin{theorem}\label{thm:population_global}
Assume that the environmental fields $T,P,R,C$ satisfy the conclusions of
Theorem~\ref{existence}. Then the population system \eqref{population}
admits a unique global classical solution
$$
\mathbf N=(N_1,\dots,N_S)\in C^{2,1}(\bar\Omega\times[0,\infty))^S,
$$
which is nonnegative and uniformly bounded in time.
\end{theorem}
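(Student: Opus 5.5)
We first obtain a local classical solution and then extend it globally using a priori bounds. By Theorem~\ref{existence}, the coefficients $1/K_i$ are bounded, with $1/K_{\max}\le 1/K_i\le 1/K_{\min}$. Since $K_i$ is built from $P,R,C\in C^{2,1}$, and from $T$ (which is smooth in $x$ and Lipschitz in $t$ apart from the kink of $|T-T_i^*|$), $K_i$ is Hölder continuous in $(x,y,t)$. The reaction terms
$$
f_i(x,y,t,\mathbf N)=r_iN_i\Bigl(1-\frac{N_i+\sum_{j\ne i}\alpha_{ij}N_j}{K_i(x,y,t)}\Bigr)
$$
are therefore locally Lipschitz in $\mathbf N$, uniformly in $(x,y,t)$, and Hölder in $(x,y,t)$. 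The system is only weakly coupled, since the diffusion matrix is diagonal. Standard semilinear parabolic theory \cite{lieberman1996,smoller} (a contraction argument in $C(\bar\Omega)^S$ via the Neumann heat semigroup, followed by Schauder bootstrapping) then gives a unique classical solution on a maximal interval $[0,T_{\max})$. It also gives the blow-up alternative: if $T_{\max}<\infty$, then $\|\mathbf N(t)\|_{L^\infty}\to\infty$ as $t\to T_{\max}$.

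\textbf{Nonnegativity.} Write the $i$-th equation as $\partial_t N_i-D_i\nabla^2N_i=c_i(x,y,t)N_i$, where
$$
c_i=r_i\Bigl(1-\frac{N_i+\sum_{j\ne i}\alpha_{ij}N_j}{K_i}\Bigr)
$$
is bounded on every compact subinterval of $[0,T_{\max})$. Since $N_i^*\ge0$, the scalar maximum principle with Neumann conditions gives $N_i\ge0$. Equivalently, each $f_i$ vanishes at $N_i=0$, so the quasi-positivity condition holds and the nonnegative orthant is invariant.

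\textbf{Uniform upper bound.} Using $N_j\ge0$ and $\alpha_{ij}>0$, each component is a subsolution:
$$
\partial_t N_i-D_i\nabla^2 N_i\le r_iN_i\Bigl(1-\frac{N_i}{K_{\max}}\Bigr).
$$
Comparison with the spatially homogeneous logistic ODE $\dot u=r_iu(1-u/K_{\max})$, with $u(0)=\|N_i^*\|_{L^\infty}$, yields
$$
0\le N_i(x,y,t)\le \max\{\|N_i^*\|_{L^\infty(\Omega)},K_{\max}\}=:M_i .
$$
The constant supersolution is admissible under Neumann conditions. Alternatively, one can argue at a spatial maximum as in Theorem~\ref{existence}, obtaining $\tfrac{d}{dt}\overline N_i\le r_i\overline N_i(1-\overline N_i/K_{\max})$. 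Because this bound is independent of $T_{\max}$, the blow-up alternative gives $T_{\max}=\infty$. This yields a global solution in $C^{2,1}(\bar\Omega\times[0,\infty))^S$ that is bounded uniformly in time.

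\textbf{Uniqueness.} Take two classical solutions. Both are bounded on $[0,T]$, and the nonlinearity is Lipschitz on bounded sets with constant depending on $K_{\min}$. A Gronwall estimate on $\|\mathbf N-\tilde{\mathbf N}\|_{L^2}^2$ (multiply the difference equation by the difference and integrate, with the Neumann conditions removing the boundary term) gives uniqueness.

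\textbf{Main obstacle.} The hard step is the regularity of the coefficients needed to obtain \emph{classical} $C^{2,1}$ solutions. The factor $e^{-\gamma_T^{(i)}|T-T_i^*|}$ is only Lipschitz in $(x,t)$, not smooth. So I would verify that $K_i$ is Hölder continuous in $(x,y,t)$, which suffices for the interior and boundary Schauder estimates in \cite{lieberman1996}. At $t=0$, the $C^{2,1}$ regularity up to $t=0$ additionally requires compatibility of $N_i^*$ with the Neumann condition. Otherwise one only obtains regularity on $\bar\Omega\times(0,\infty)$ together with continuity at $t=0$. I would state this caveat, or assume the compatibility condition.
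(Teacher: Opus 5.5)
Your proposal is correct and follows essentially the same route as the paper. Both arguments use local existence with a blow-up alternative from standard semilinear theory, nonnegativity because $f_i=0$ when $N_i=0$, and comparison of each $N_i$ with the spatially homogeneous logistic solution with capacity $K_{\max}$ to get a time-uniform bound and hence $T_{\max}=\infty$. Your refinements tighten points the paper passes over: you state the blow-up alternative in $L^\infty$ (the paper states it in $C^2$, which an $L^\infty$ bound rules out only after an extra regularity step), and you note that $K_i$ is merely Lipschitz because of $|T-T_i^*|$, so H\"older regularity is what the Schauder step actually uses.
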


\begin{proof}
Define
$$
f_i(\mathbf N,x,y,t)
= r_i N_i\left(1-\frac{N_i+\sum_{j\neq i}\alpha_{ij}N_j}{K_i(x,y,t)}\right),
\quad i=1,\dots,S.
$$
Since $K_i(x,y,t)$ is smooth and strictly positive, $f_i$ is locally Lipschitz
in $\mathbf N$.

By standard existence theory for semilinear parabolic systems with Neumann
boundary conditions \cite{amann1985,henry1981}, there exists a unique classical
solution
$$
\mathbf N\in C^{2,1}(\bar\Omega\times[0,\tau))^S,
$$
where either $\tau=\infty$ or
$$
\lim_{t\to\tau}\sum_{i=1}^S\|N_i(\cdot,\cdot,t)\|_{C^2(\bar\Omega)}=\infty.
$$

Since $f_i(\mathbf N)=0$ when $N_i=0$, the comparison principle implies  \cite{evans2010,friedman,henry1981}
$$
N_i(x,y,t)\ge0
\quad \text{for all }(x,y)\in\bar\Omega,\ t\in[0,\tau).
$$

By Theorem~\ref{existence}, there exists $K_{\max}>0$ such that
$$
0<K_i(x,y,t)\le K_{\max}
\quad \text{for all }(x,y,t)\in\Omega\times[0,\infty).
$$
Hence,
$$
\frac{\partial N_i}{\partial t}
\le D_i \nabla^2  N_i + r_i N_i\left(1-\frac{N_i}{K_{\max}}\right).
$$

Let $\bar N_i(t)$ be the solution of the logistic ODE
$$
\frac{d\bar N_i}{dt}
= r_i\bar N_i\left(1-\frac{\bar N_i}{K_{\max}}\right),
\quad
\bar N_i(0)=\|N_i^*\|_{L^\infty(\Omega)}.
$$
By the parabolic comparison principle \cite[Theorem 10.1]{smoller},
$$
0\le N_i(x,y,t)\le\bar N_i(t)
\le\max\{\|N_i^*\|_{L^\infty(\Omega)},K_{\max}\}
=: \kappa_i,
$$
for all $(x,y)\in\Omega$ and $t\in[0,\tau)$.

Thus the solution remains uniformly bounded in time, which excludes finite-time
blow-up and implies $\tau=\infty$.
\end{proof}

\section{Numerical Simulations and Spatial Metrics} \label{simulation}
We investigate the spatiotemporal dynamics of two interacting species ($S=2$) evolving in a heterogeneous environment governed by coupled reaction--diffusion processes. The spatial domain is $\Omega=[0,L_x]\times[0,L_y]$ with $L_x=L_y=L^*=100$. Simulations are initialized from weakly perturbed, nearly homogeneous population distributions, allowing spatial structure to emerge endogenously through eco--environmental feedbacks.

Figures~\ref{fig:species1} and~\ref{fig:species2} show the temporal evolution of the population densities of the two species. From initially diffuse states, both species spontaneously develop localized high-density aggregates, indicative of self-organized spatial pattern formation driven by environmental heterogeneity. Regions surrounding fixed pollution sources remain persistently unfavorable, leading to sustained population suppression consistent with pollution-induced reductions in effective carrying capacity (Fig.~\ref{fig:pollution}).

By contrast, population densities are markedly enhanced near localized resource sources, where increased resource availability locally elevates the carrying capacity and promotes growth (Fig.~\ref{fig:resource}). As the system evolves, these favorable regions expand and stabilize, producing persistent spatial patterns whose locations closely track the underlying environmental gradients.

The cooperation field exhibits a gradual domain-wide increase over time (Fig.~\ref{fig:cooperation}), while remaining strongly suppressed in highly polluted regions. This pronounced spatial anticorrelation supports the role of environmental stressors in constraining the emergence and maintenance of cooperative behavior.

The temperature field evolves dynamically due to a periodically forced El~Niño--type anomaly (Fig.~\ref{fig:temperature}), producing a moving east--west band of elevated or reduced temperature. Within this band, population densities are generally reduced, except where strong resource enrichment coincides with the anomaly, indicating that localized resource inputs can partially offset adverse thermal conditions.

Notably, the leftmost resource source lies in close proximity to a pollution source (Figs.~\ref{fig:pollution} and~\ref{fig:resource}). Despite elevated pollution levels, both species persist in this region, whereas populations are more strongly suppressed near the right pollution source, where no compensating resource input is present. This contrast highlights the nontrivial interplay between multiple environmental drivers in shaping spatial persistence.

\begin{figure}[H]
\centering
\includegraphics[scale=0.30]{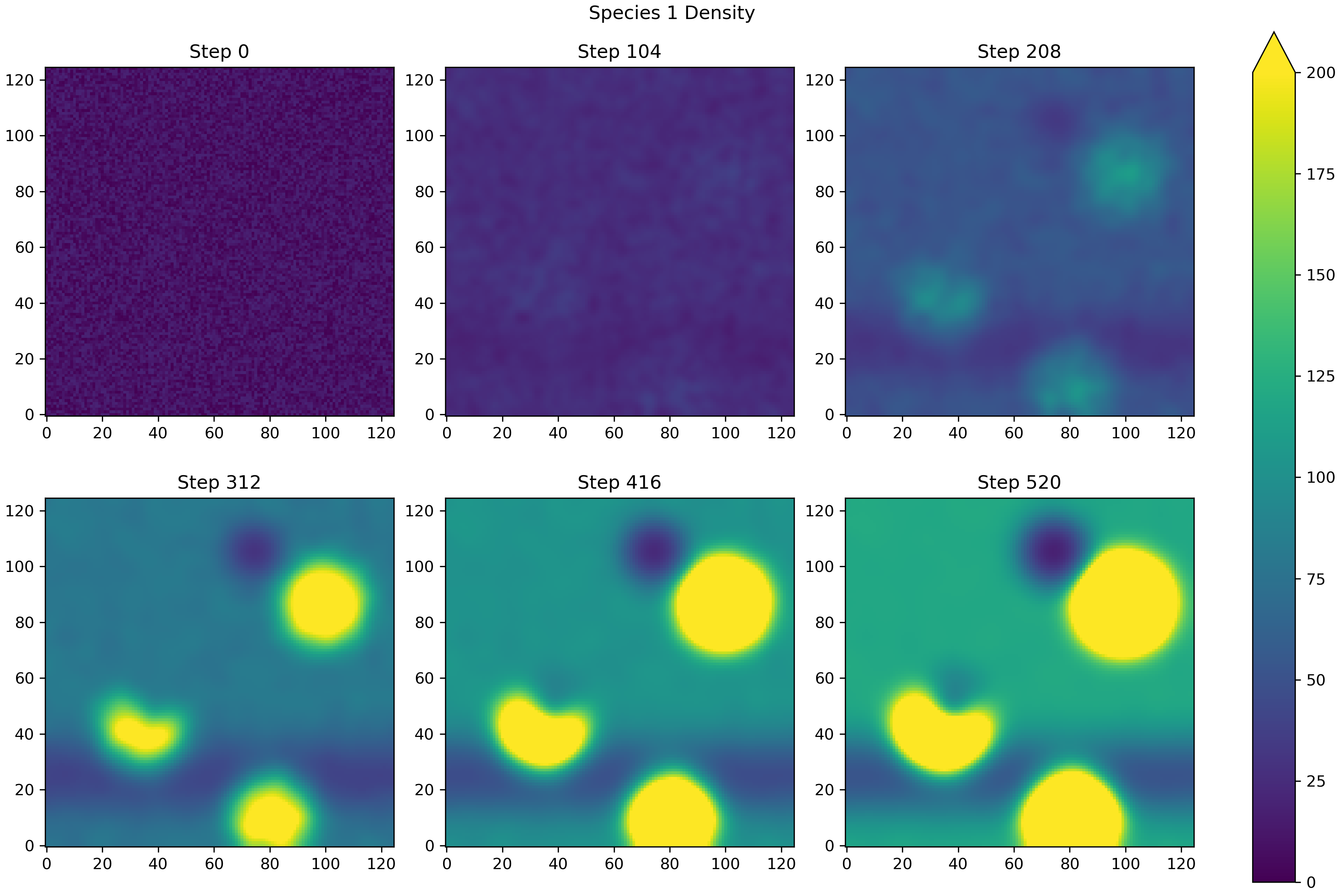}
\caption{Spatiotemporal evolution of species 1 density}
\label{fig:species1}
\end{figure}

\begin{figure}[H]
\centering
\includegraphics[scale=0.30]{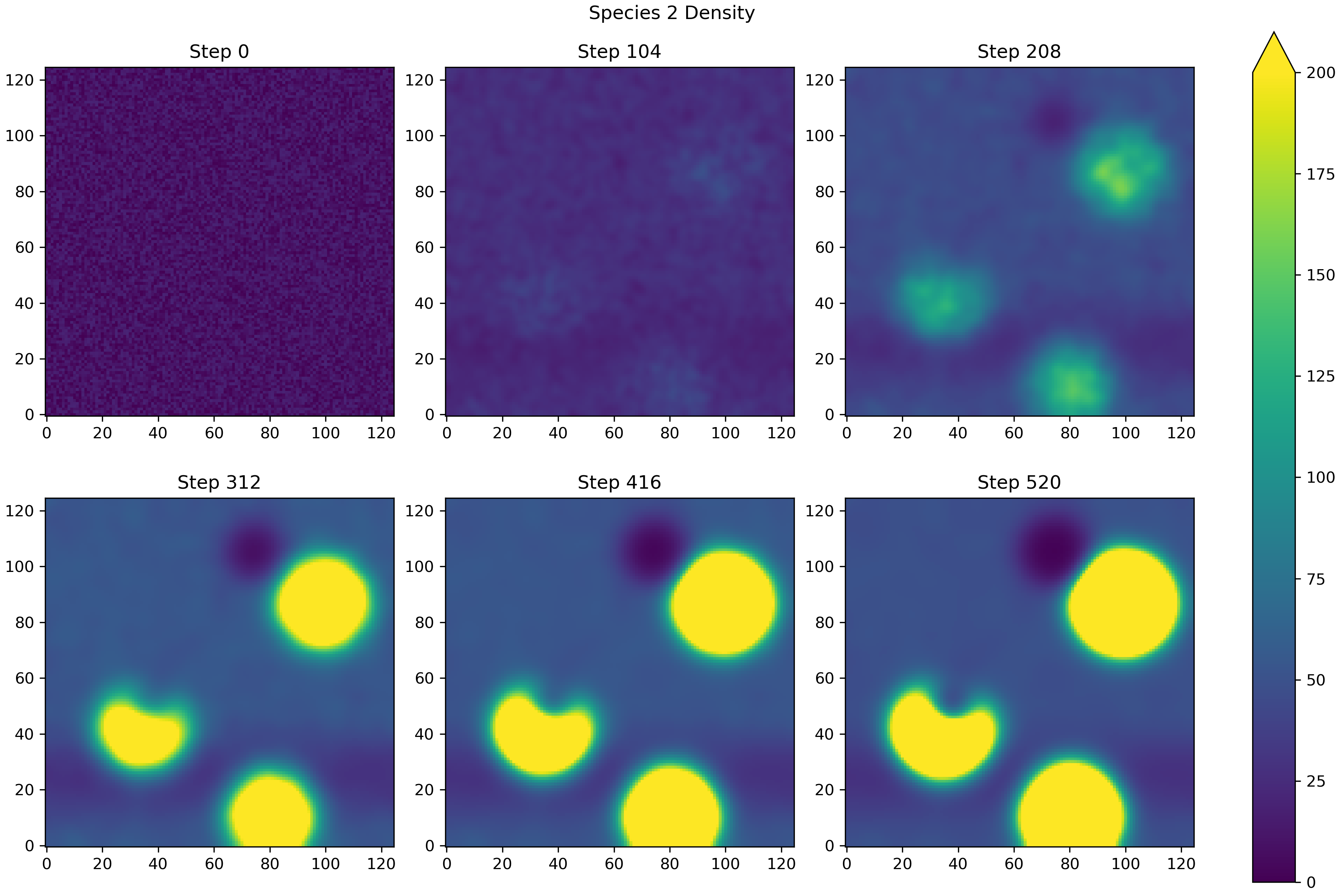}
\caption{Spatiotemporal evolution of species 2 density}
\label{fig:species2}
\end{figure}

\begin{figure}[H]
\centering
\includegraphics[scale=0.3]{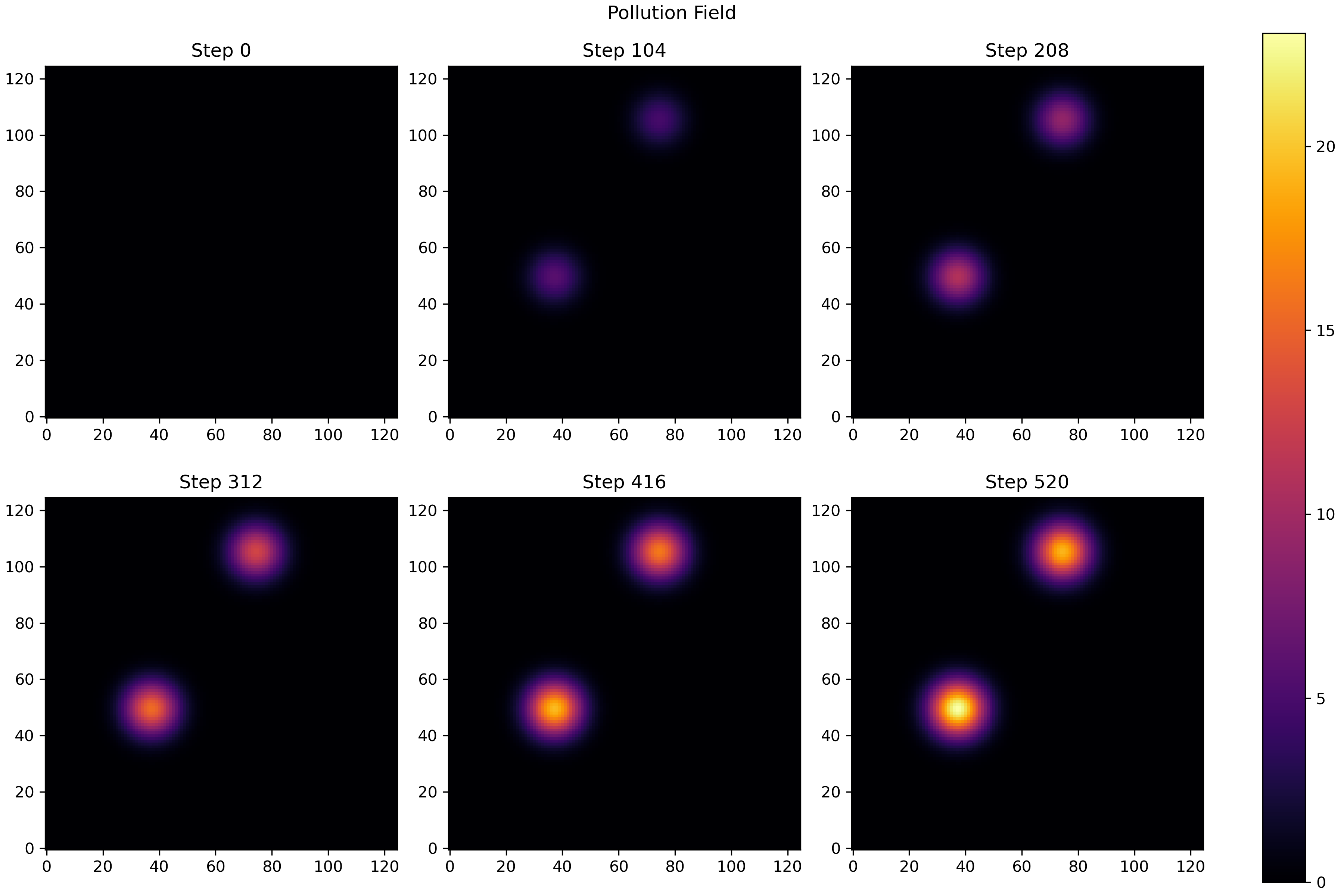}
\caption{Temporal evolution of the pollution field}
\label{fig:pollution}
\end{figure}

\begin{figure}[H]
\centering
\includegraphics[scale=0.3]{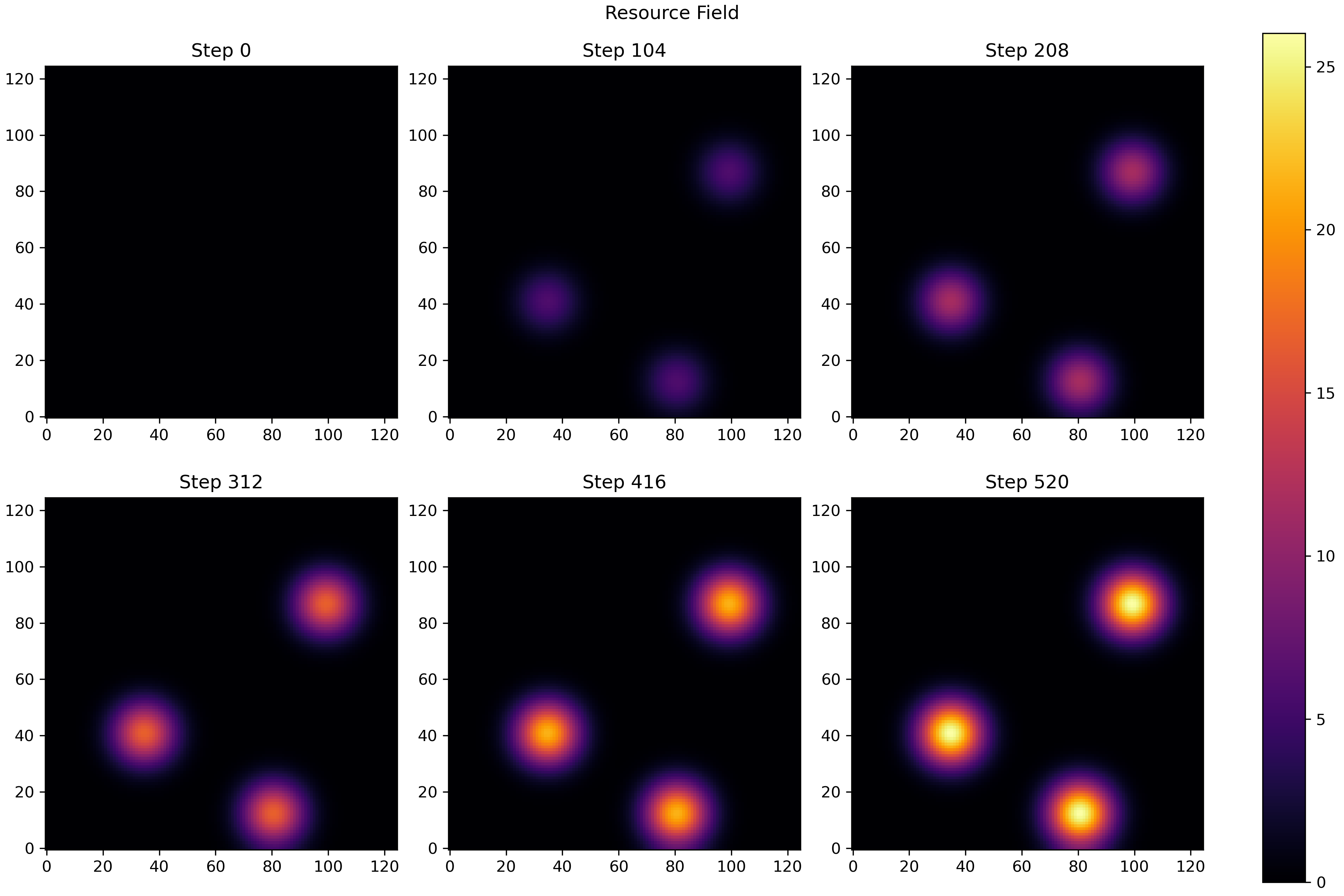}
\caption{Temporal evolution of the resource field}
\label{fig:resource}
\end{figure}

\begin{figure}[H]
\centering
\includegraphics[scale=0.3]{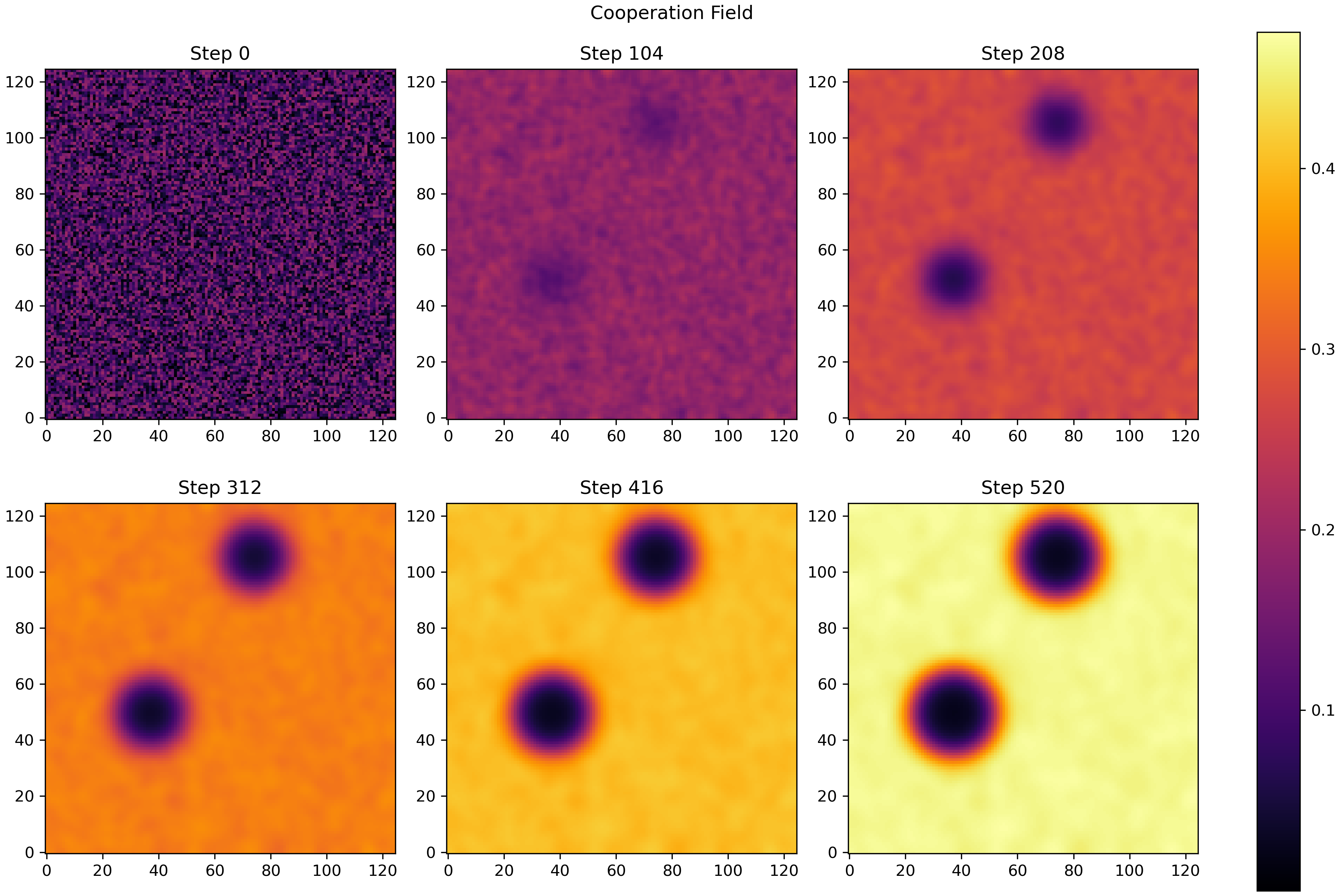}
\caption{Temporal evolution of the cooperation field}
\label{fig:cooperation}
\end{figure}

\begin{figure}[H]
\centering
\includegraphics[scale=0.3]{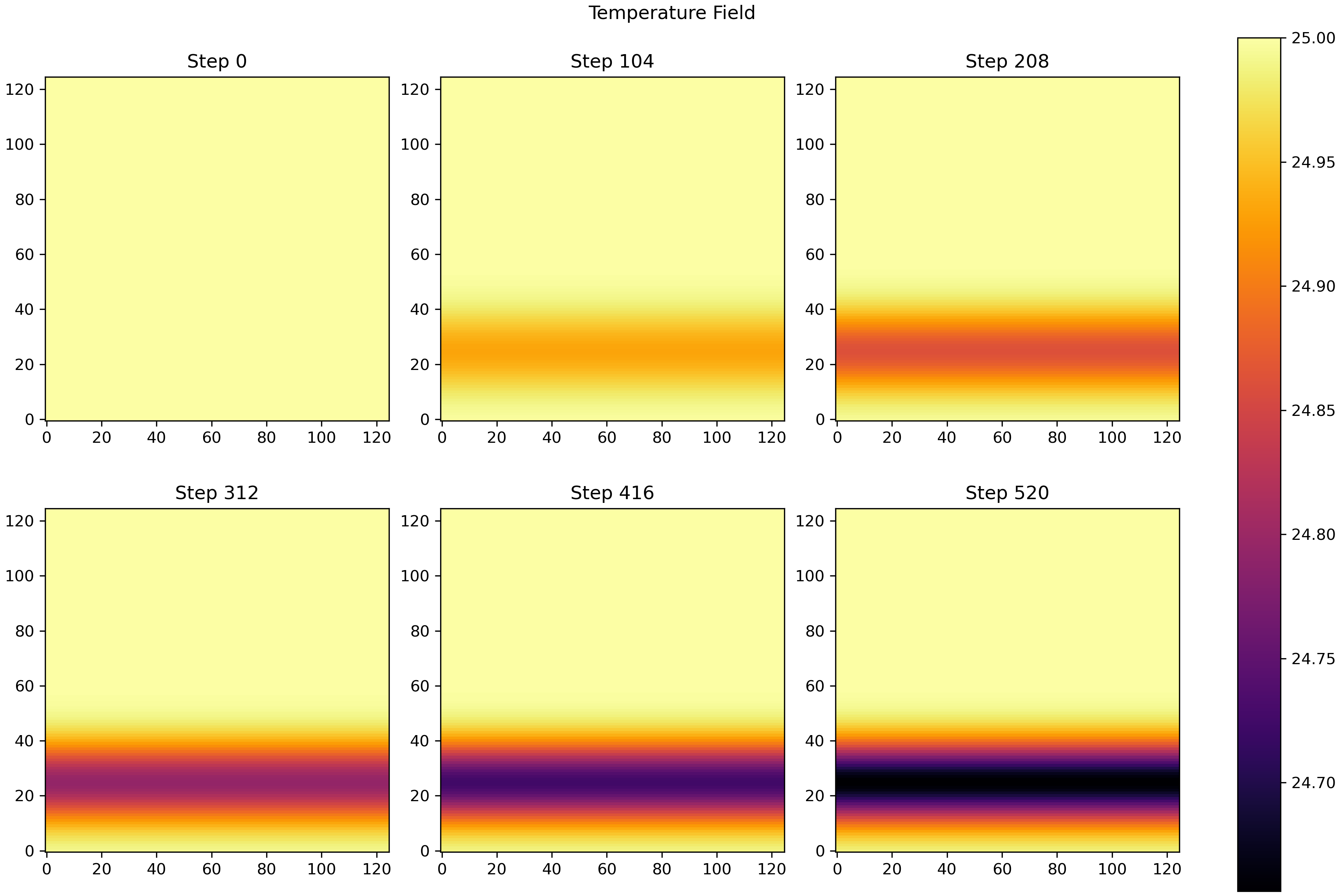}
\caption{Spatiotemporal dynamics of temperature, including El Niño-induced anomaly}
\label{fig:temperature}
\end{figure}

\subsection*{Dominance structure and competitive interfaces}

To characterize spatial competition, we construct dominance maps at selected time points. At each spatial location, dominance is defined as the species attaining the larger local population density. Competitive boundaries are identified as interfaces separating regions dominated by different species and are highlighted by black contour lines.

Figure~\ref{fig:dominant} illustrates the temporal evolution of these dominance patterns. Initially ($t=0$), dominance is highly fragmented, reflecting fine-scale intermixing inherited from nearly homogeneous initial conditions. As the system evolves, a pronounced coarsening process occurs, with small dominance patches progressively merging into larger, coherent regions.

By Step~520, the spatial organization approaches a quasi-stationary configuration characterized by a matrix--inclusion structure: Species~1 occupies most of the continuous background domain, while Species~2 forms several large, compact clusters embedded within it. This spatial asymmetry indicates a clear segregation of dominance regimes, with one species prevailing over extended areas and the other maintaining competitive advantage primarily within localized territories.

\begin{figure}[ht]
\centering
\includegraphics[scale=0.3]{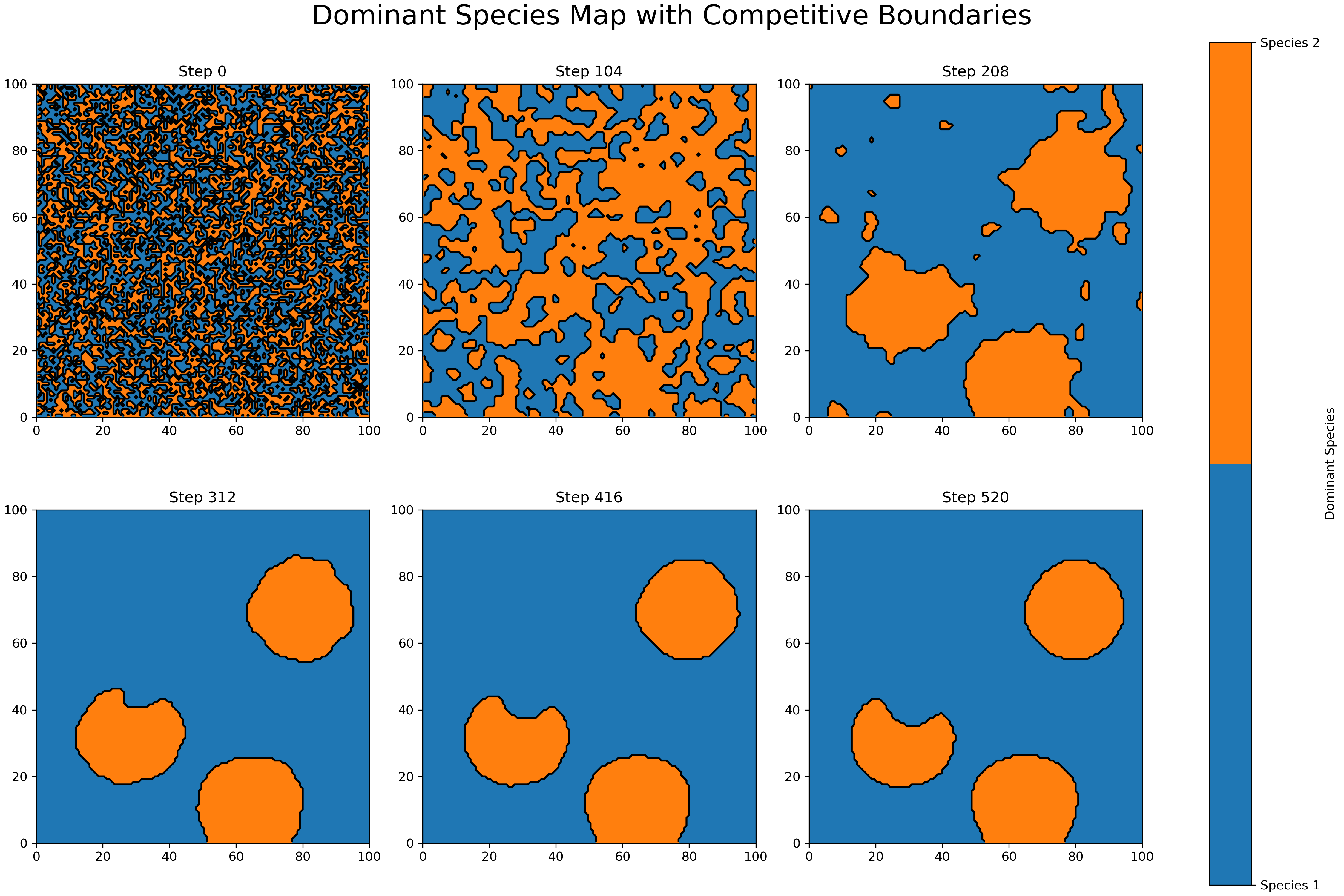}
\caption{Temporal evolution of the dominance map. At each point, the color indicates which species has the higher population density.}
\label{fig:dominant}
\end{figure}

\subsection*{Boundary length and dominant area dynamics}

To quantify the extent of competitive interfaces, we compute the total boundary length separating dominance regions. As shown in Figure~\ref{fig:boundary}, the boundary length exhibits a rapid initial decline up to approximately Step~220, indicating a phase of intense spatial reorganization during which fragmented dominance patches merge. Beyond this point, the boundary length approaches a plateau, signaling stabilization of large-scale spatial structure and persistent interfaces.

\begin{figure}[H]
\centering
\includegraphics[scale=0.4]{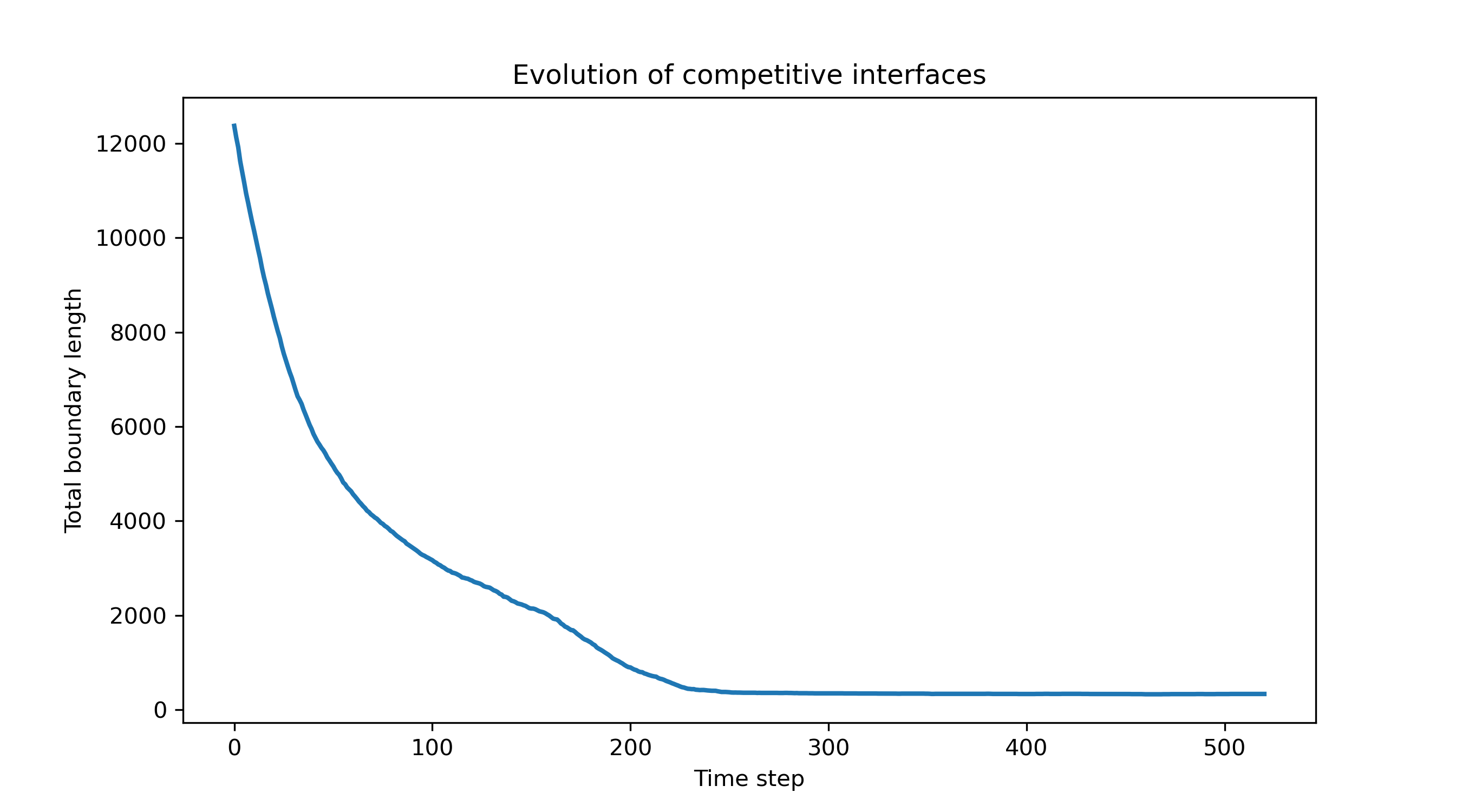}
\caption{Temporal evolution of the total boundary length separating dominant regions of the two species.}
\label{fig:boundary}
\end{figure}

We further quantify dominance by measuring the spatial area occupied by each species. At each point $(x,y)\in\Omega$, the dominant species is defined as
$$
i^*(x,y,t)=\arg\max_{j=1,2} N_j(x,y,t),
$$
with the dominant region and corresponding area given by
$$
\mathrm{Dom}_i(t)=\{(x,y)\in\Omega \mid i^*(x,y,t)=i\}, \qquad
A_i(t)=\int_\Omega \mathbf{1}_{\{i^*(x,y,t)=i\}}\,dx\,dy.
$$

Figure~\ref{fig:dominant_area} shows the temporal evolution of $A_i(t)$. During the early phase (up to approximately Step~100), Species~2 exhibits a transient expansion of its dominant area, accompanied by a corresponding contraction of Species~1. This trend reverses after Step~100, with Species~1 overtaking Species~2 around Step~130. Both areas continue to adjust until approximately Step~250, after which they converge to stable values.

The stabilization of $A_i(t)$ coincides with the saturation of boundary length (Fig.~\ref{fig:boundary}), confirming the emergence of a quasi-stationary spatial configuration characterized by long-term coexistence rather than competitive exclusion.

\begin{figure}[H]
\centering
\includegraphics[scale=0.4]{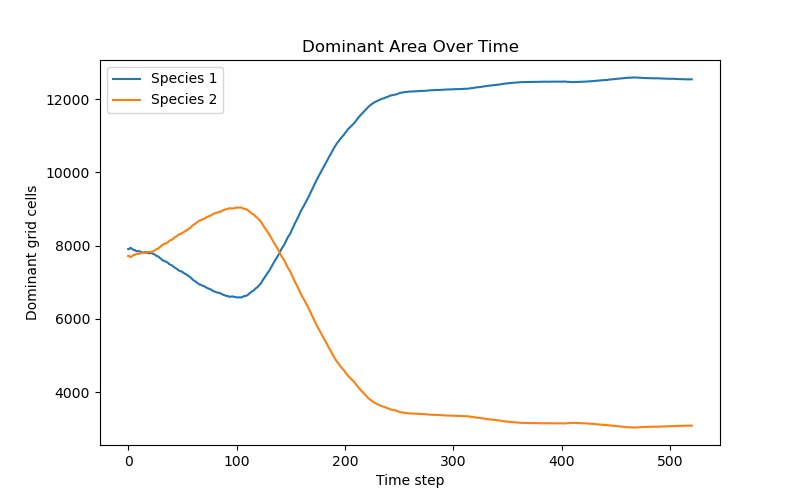}
\caption{Temporal evolution of the dominant area size $A_i(t)$ for each species. At each spatial location, dominance is defined as the species attaining the highest local population density. The figure shows the time-dependent area of the domain occupied by each species under this definition. After an initial transient phase, the dominant areas approach stable values, consistent with the emergence of a quasi-stationary spatial configuration.}
\label{fig:dominant_area}
\end{figure}

\subsection*{Geometric complexity of competitive interfaces}

To quantify the geometric complexity of the interface between dominance regions, we compute the Minkowski--Bouligand (box-counting) fractal dimension. At time $t$, the fractal dimension is defined as
$$\mathrm{dim}(t)=\lim_{\epsilon \to 0} \frac{\log N(\epsilon)}{\log(\frac{1}{\epsilon}},$$
where $N(\epsilon)$ is the minimum number of square boxes of side length $\epsilon$ required to cover the competitive boundary.

As shown in Figure~\ref{fig:fractal}, the fractal dimension remains close to $2$ until approximately Step~50, indicating highly convoluted, space-filling interfaces and strong spatial intermixing. Thereafter, $\mathrm{dim}(t)$ declines as fragmented patches merge and interfaces simplify. This decay proceeds in two stages: a gradual decrease up to approximately Step~180, followed by a more pronounced drop as the system approaches a stable configuration. By Step~350, the fractal dimension stabilizes in the range $1.3$--$1.35$, consistent with relatively smooth and persistent boundaries.

\begin{figure}[H]
\centering
\includegraphics[scale=0.4]{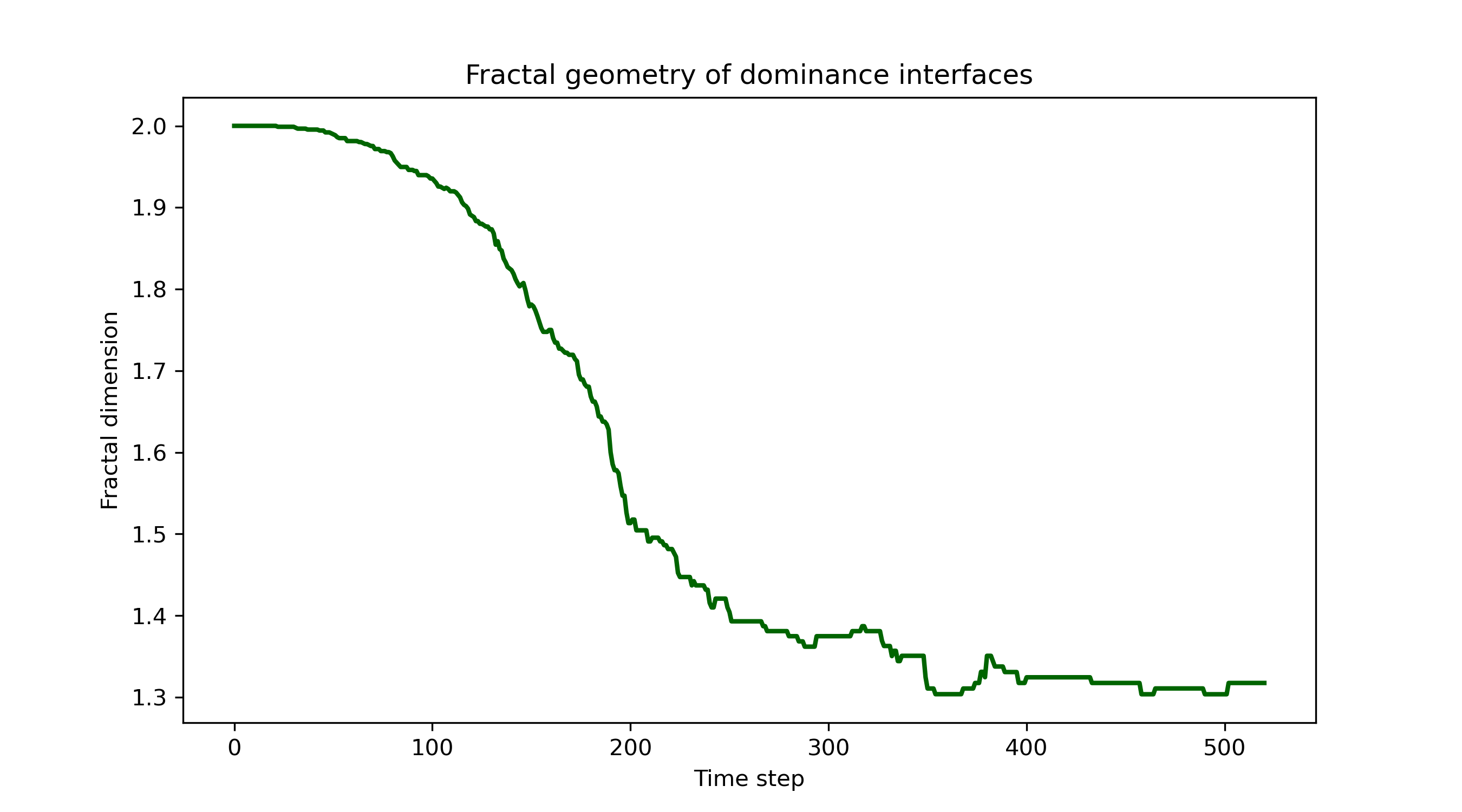}
\caption{Temporal evolution of the fractal (Minkowski--Bouligand) dimension of the boundary separating dominant species regions. The decay of the fractal dimension indicates a transition from highly intricate, space-filling interfaces to smoother and more stable boundaries as spatial organization emerges.}
\label{fig:fractal}
\end{figure}

\subsection*{Global population dynamics and spatial dispersion}

We compute the spatially averaged population density of each species,
$$\bar N_i(t)=\frac{1}{|\Omega|} \int_{\Omega} N_i(x,y,t)dxdy.$$

As shown in Figure~\ref{fig:average_population}, both averages increase monotonically over time and remain nearly identical for most of the simulation. A slight divergence appears after Step~400, with Species~2 attaining a marginally higher mean density despite occupying a smaller dominant area. This contrast indicates that Species~1 is spatially more expansive, whereas Species~2 sustains higher local densities within its occupied regions.

 \begin{figure}[H]
\centering
\includegraphics[scale=0.4]{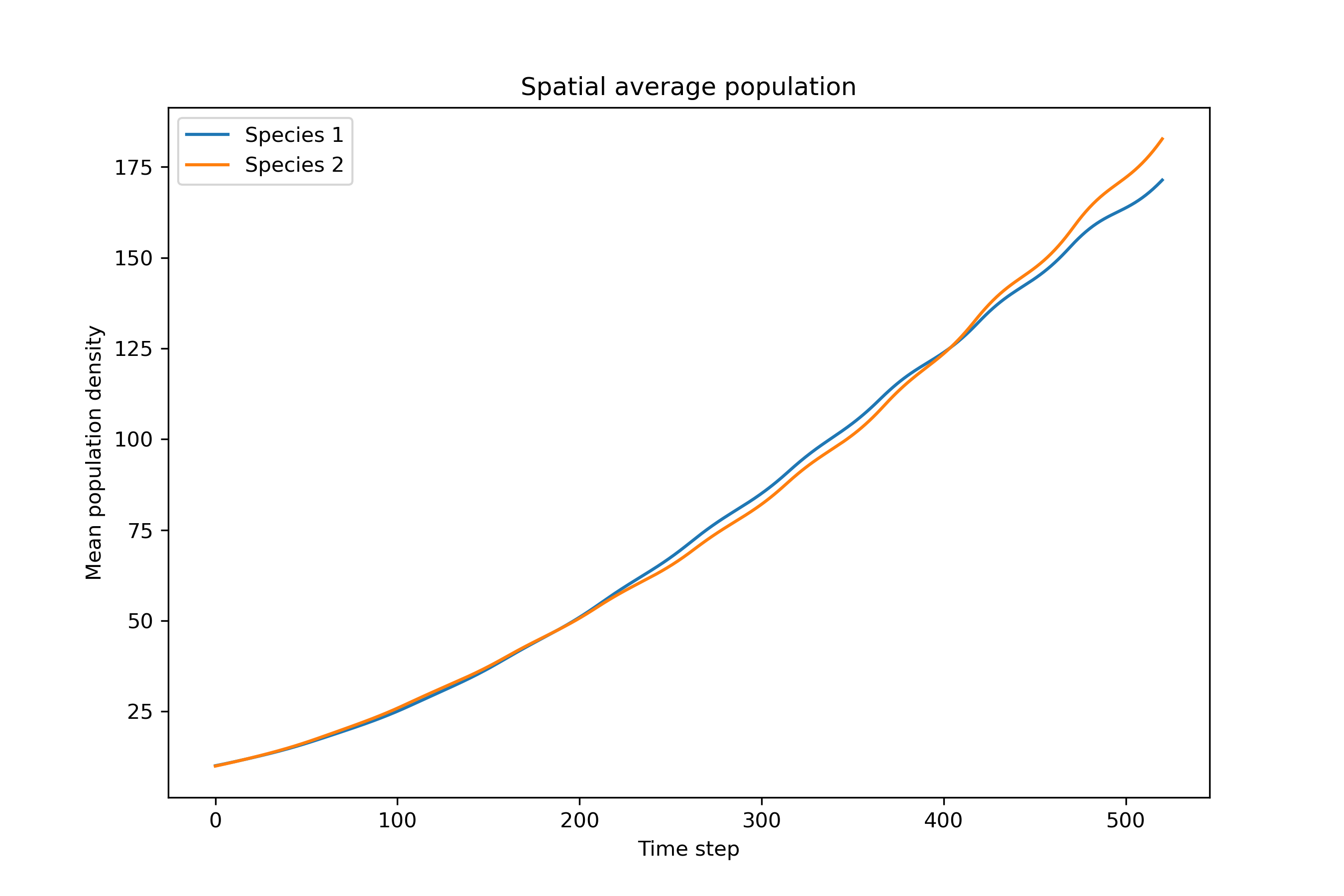}
\caption{Temporal evolution of the spatially averaged population densities $\bar N_i(t)$ for both species. Both averages increase steadily over time, with a slight divergence emerging at later stages of the simulation.}
\label{fig:average_population}
\end{figure}

Finally, we quantify spatial dispersion using the spatial Shannon entropy~\cite{etienne2005}. Defining the normalized spatial density
$$p_i(x,y,t)=\frac{N_i(x,y,t)}{\int_{\Omega} N_i(x,y,t)dxdy},$$
the entropy is given by
$$H_i(t)=-\int_{\Omega} p_i(x,y,t) \log p_i(x,y,t)dxdy.$$

Figure~\ref{fig:entropy} shows that both species exhibit a rapid entropy increase during the early phase (up to approximately Step~50), reflecting initial spreading. Entropy then remains relatively stable until around Step~150, after which it declines as spatial aggregation intensifies. This decline is markedly more pronounced for Species~2, indicating increasing localization into compact, high-density clusters, whereas Species~1 maintains a more spatially extended distribution. Together with dominance-area and mean-density results, these findings reveal distinct spatial strategies underpinning stable coexistence.

\begin{figure}[H]
\centering
\includegraphics[scale=0.4]{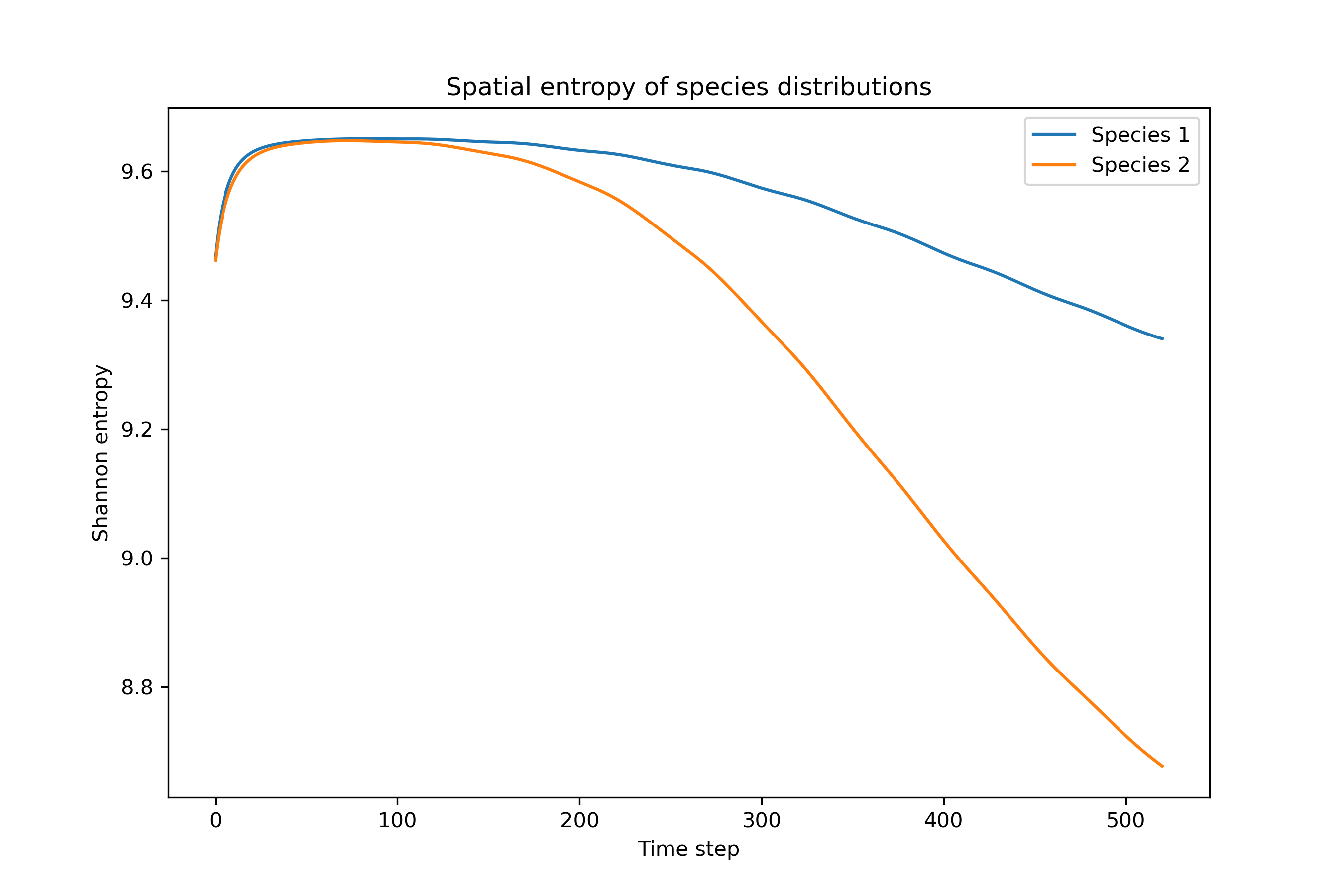}
\caption{Temporal evolution of the spatial Shannon entropy $H_i(t)$ for both species, quantifying changes in spatial dispersion and aggregation over time.}
\label{fig:entropy}
\end{figure}

\subsection*{Parameter setting}

The spatial domain $\Omega = [0,L^*]\times[0,L^*]$ is discretized using a uniform Cartesian grid with spatial resolution $dx = dy = 0.8$. Time integration is performed using an explicit finite-difference scheme with time step $dt = 0.01$, which satisfies the classical stability condition for diffusion-dominated systems  for all fields considered \cite{Morton2005},
$$
dt \leq \frac{1}{2D\left(\frac{1}{dx^2} + \frac{1}{dy^2}\right)},
$$
where $D$ denotes the diffusion coefficient of the corresponding field (population density, pollution, resource, or cooperation).

For the population dynamics model \eqref{population}, the diffusion coefficients and intrinsic growth rates are set to
$$
(D_1, D_2) = (1, 0.8), \qquad (r_1, r_2) = (1.2, 1.4),
$$
and interspecific competition is described by the matrix
$$
\boldsymbol{\alpha} =
\begin{bmatrix}
\alpha_{11} & \alpha_{12} \\
\alpha_{21} & \alpha_{22}
\end{bmatrix}
=
\begin{bmatrix}
1.0 & 0.5 \\
0.7 & 1.0
\end{bmatrix}.
$$
Initial population densities $N_1^*(x,y)$ and $N_2^*(x,y)$ are drawn independently from a spatially uncorrelated uniform distribution on $[0,20)$.

The environmentally modulated carrying capacity \eqref{capacity} is parameterized by maximal baseline values $K_0^{(1)} = 100$ and $K_0^{(2)} = 80$. The optimal temperatures of the two species are $T_1^* = 25.1^\circ\mathrm{C}$ and $T_2^* = 24.9^\circ\mathrm{C}$. Sensitivities to temperature deviation and pollution are given by $(\gamma_T^{(1)}, \gamma_T^{(2)}) = (0.3, 0.25)$ and $(\gamma_P^{(1)}, \gamma_P^{(2)}) = (0.1, 0.12)$, respectively. Positive environmental feedbacks due to cooperation and resource availability are controlled by $(\gamma_C^{(1)}, \gamma_C^{(2)}) = (1.4, 1.3)$ and $(\gamma_R^{(1)}, \gamma_R^{(2)}) = (1, 1.6)$.

In the pollution dynamics \eqref{pollution}, the diffusion coefficient is set to $D_P = 1.0$ and the natural decay rate to $\lambda_P = 0.05$. We consider $M=2$ fixed pollution sources located at $(0.3L^*, 0.4L^*)$ and $(0.6L^*, 0.85L^*)$, with emission strengths $(A_1, A_2) = (0.06L^*, 0.05L^*)$ and identical spatial spreads $\sigma_1 = \sigma_2 = 0.05L^*$. Pollution is initially absent throughout the domain, i.e., $P_0(x,y)=0$ on $\bar{\Omega}$.

For the resource dynamics \eqref{resource}, we set the diffusion coefficient $D_R = 0.8$ and the degradation rate $\lambda_R = 0.03$. Three localized resource sources ($L=3$) are introduced at positions $\mathbf{x}_1^{(R)} = (0.28L^*, 0.33L^*)$, $\mathbf{x}_2^{(R)} = (0.65L^*, 0.1L^*)$, and $\mathbf{x}_3^{(R)} = (0.8L^*, 0.7L^*)$, each with identical input strengths $B_1 = B_2 = B_3 = 0.06L^*$ and spatial spreads $\eta_1 = \eta_2 = \eta_3 = 0.06L^*$. The initial resource concentration is set to zero across the domain.

The temperature field \eqref{temperature} consists of a baseline value $T_{\mathrm{base}} = 25^\circ\mathrm{C}$ and a spatiotemporally varying El Niño–type anomaly with amplitude $A_T = 4^\circ\mathrm{C}$ and period $\tau_T = 4$ years. The anomaly is centered at $x_0 = 0.2L^*$ with spatial standard deviation $\sigma_T = 0.08L^*$ along the $x$-direction.

For the cooperation dynamics \eqref{cooperation}, the diffusion coefficient is set to $D_C = 0.5$, the intrinsic growth rate to $\alpha_C = 0.1$, and the pollution-induced suppression coefficient to $\beta_C = 0.2$. The initial cooperation field $C^*(x,y)$ is initialized randomly from a uniform distribution on $[0,0.2)$.

All simulations are performed using identical numerical settings unless otherwise stated.

\section{Parameter Estimation From Sparse Spatial Observations} \label{deeplearning}

We consider the challenging scenario in which only two spatial snapshots of the system are available, corresponding to the population densities of two interacting species observed at two distinct time points separated by 14 days (Figure~\ref{2samples}). The left and right columns represent the spatial distributions at the initial time and after one observation interval, respectively.

\begin{figure}[H]
\centering
  \begin{minipage}{0.25\textwidth}
    \includegraphics[width=\linewidth]{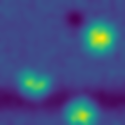}
  \end{minipage}
  \hspace{5pt}
  \begin{minipage}{0.25\textwidth}
    \includegraphics[width=\linewidth]{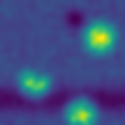}
  \end{minipage}\\
  \vspace{1mm}     
  \begin{minipage}{0.25\textwidth}
    \includegraphics[width=\linewidth]{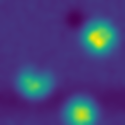}
  \end{minipage}
  \hspace{5pt}
  \begin{minipage}{0.25\textwidth}
    \includegraphics[width=\linewidth]{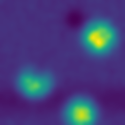}
  \end{minipage}
\caption{Two spatial observations of population densities for Species~1 (top row) and Species~2 (bottom row). The left column corresponds to the initial time, and the right column shows the distributions after one observation interval of 14 days.}
\label{2samples}
\end{figure}

Inferring the governing dynamics or estimating model parameters from only two spatiotemporal observations is, to the best of our knowledge, infeasible using conventional statistical inference or purely data-driven approaches. To address this challenge, we propose a simple yet effective hybrid framework that combines numerical simulation with deep learning to estimate model parameters from extremely sparse spatial data.

Our objective is to infer the parameters governing the population dynamics described by Eq.~\eqref{population}. For tractability, we assume that all parameters associated with the pollution field~\eqref{pollution}, resource field~\eqref{resource}, temperature field~\eqref{temperature}, and cooperation field~\eqref{cooperation} are known. The remaining fourteen parameters to be estimated are
\begin{equation}\label{14parameters}
(r_1, r_2, D_1, D_2, \gamma_T^{(1)}, \gamma_T^{(2)}, \gamma_P^{(1)}, \gamma_P^{(2)}, \gamma_C^{(1)}, \gamma_C^{(2)}, \gamma_R^{(1)}, \gamma_R^{(2)}, \alpha_{12}, \alpha_{21}),
\end{equation}
which collectively control intrinsic growth, diffusion, environmental sensitivity, cooperation, resource response, and interspecific competition. In principle, the same procedure can be applied iteratively to estimate parameters in the auxiliary field equations as well.

Let $\mathcal{A}$ and $\mathcal{B}$ denote the sets of spatial population density images corresponding to the left and right columns of Figure~\ref{2samples}, respectively. Our parameter estimation pipeline proceeds as follows:
\begin{itemize}
\item[\textbf{Step 1.}] Uniformly sample a large collection of candidate parameter sets, denoted by $\mathcal{S}$. The sampled values for each parameter are denoted by $\mathcal{S}_i$ $(i=1,\dots,14)$.
\item[\textbf{Step 2.}] For each parameter set in $\mathcal{S}$, numerically solve Eq.~\eqref{population} using $\mathcal{A}$ as the initial condition.
\item[\textbf{Step 3.}] Record the simulated population densities at $t=14$ days. Each simulation produces a pair of images analogous to $\mathcal{B}$. The resulting dataset is denoted by $\mathcal{S}^\ast$, with a one-to-one correspondence between elements of $\mathcal{S}$ and $\mathcal{S}^\ast$.
\item[\textbf{Step 4.}] Train deep learning models on the paired dataset $\mathcal{S}^\ast \times \mathcal{S}$, or on $\mathcal{S}^\ast \times \mathcal{S}_i$, to learn an inverse mapping from spatial population patterns to parameter values. Once trained, the models are used to predict the unknown parameters directly from the observed sample $\mathcal{B}$.
\end{itemize}

In practice, we adopt the same numerical scheme and environmental settings as in Section~\ref{simulation}. In Step~1, we generate 250,000 parameter sets by uniform sampling over the following ranges:
$$
\begin{aligned}
&r_i \in [0.1, 1.5], \quad D_i \in [0.1, 1.5], \\
&\gamma_T^{(i)} \in [0.1, 2], \quad \gamma_P^{(i)} \in [0.1, 2], \\
&\gamma_C^{(i)} \in [0, 1], \quad \gamma_R^{(i)} \in [0, 2],  \qquad (i=1,2)\\
&\alpha_{12}, \alpha_{21} \in [0, 1.8].
\end{aligned}
$$
All simulated images are resized to $125 \times 125$ pixels before being used as network inputs.

To balance representativeness and architectural diversity, we select one model from each major family of modern vision architectures. Specifically, we evaluate six widely used and conceptually distinct models: ResNet50~\cite{he2016deep}, EfficientNet~\cite{tan2019efficientnet}, MobileViT~\cite{mehta2021mobilevit}, MobileNetV3~\cite{howard2019searching}, ShuffleNetV2~\cite{ma2018shufflenet}, and the Swin Transformer~\cite{liu2021swin}. These architectures span classical convolutional networks, lightweight mobile-optimized models, hybrid CNN–transformer designs, and fully transformer-based approaches.

All models are trained using the Adam optimizer with an initial learning rate of $5\times10^{-4}$ and a batch size of 256. Training is performed for up to 50 epochs with early stopping based on validation loss. All experiments are conducted using the PyTorch framework on a workstation equipped with an Intel Core i9-10850K CPU, 64\,GB RAM, and an NVIDIA GeForce RTX~3090 GPU.

We first train each model to predict all fourteen parameters simultaneously, using the dataset $\mathcal{S}^\ast \times \mathcal{S}$. The training and validation loss curves for all six architectures are shown in Figure~\ref{training_loss}. Among the tested models, the Swin Transformer achieves the best overall performance, with a final validation mean squared error of $0.0214$. The trained network contains $27{,}506{,}798$ parameters, requires approximately $4.37\times10^9$ floating-point operations per inference, and converges after approximately 367 minutes of training. The consistently low training and validation errors indicate that the proposed framework can learn a stable inverse mapping from sparse spatial observations to a high-dimensional parameter space.

\begin{figure}[H]
\centering
\includegraphics[scale=0.4]{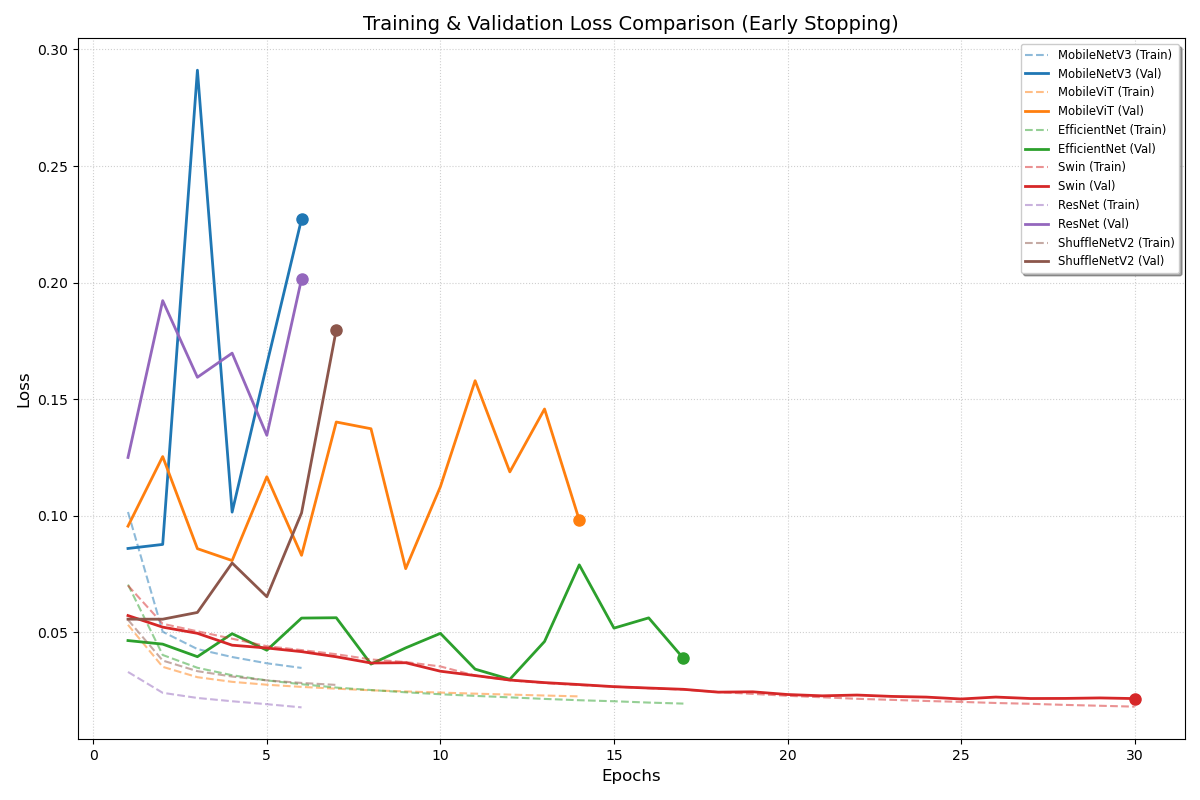}
\caption{Training and validation losses of six models}
\label{training_loss}
\end{figure}

To assess whether joint prediction or parameter-wise prediction yields better performance, we further train separate Swin Transformer models for each individual parameter using the datasets $\mathcal{S}^\ast \times \mathcal{S}_i$. The corresponding training and validation loss curves are shown in Figure~\ref{training_loss_each}. For several parameters, the mean squared error is substantially lower than that obtained when predicting all parameters simultaneously. The parameter-wise validation MSEs, listed in the order \eqref{14parameters}
are
$$
(0.00419, 0.00392, 0.08374, 0.08392, 0.08320, 0.08335, 0.00065, 0.00015, $$ 
$$0.08350, 0.08366, 0.00206, 0.08314, 0.08298, 0.01288).
$$

\begin{figure}[H]
\centering
\includegraphics[scale=0.35]{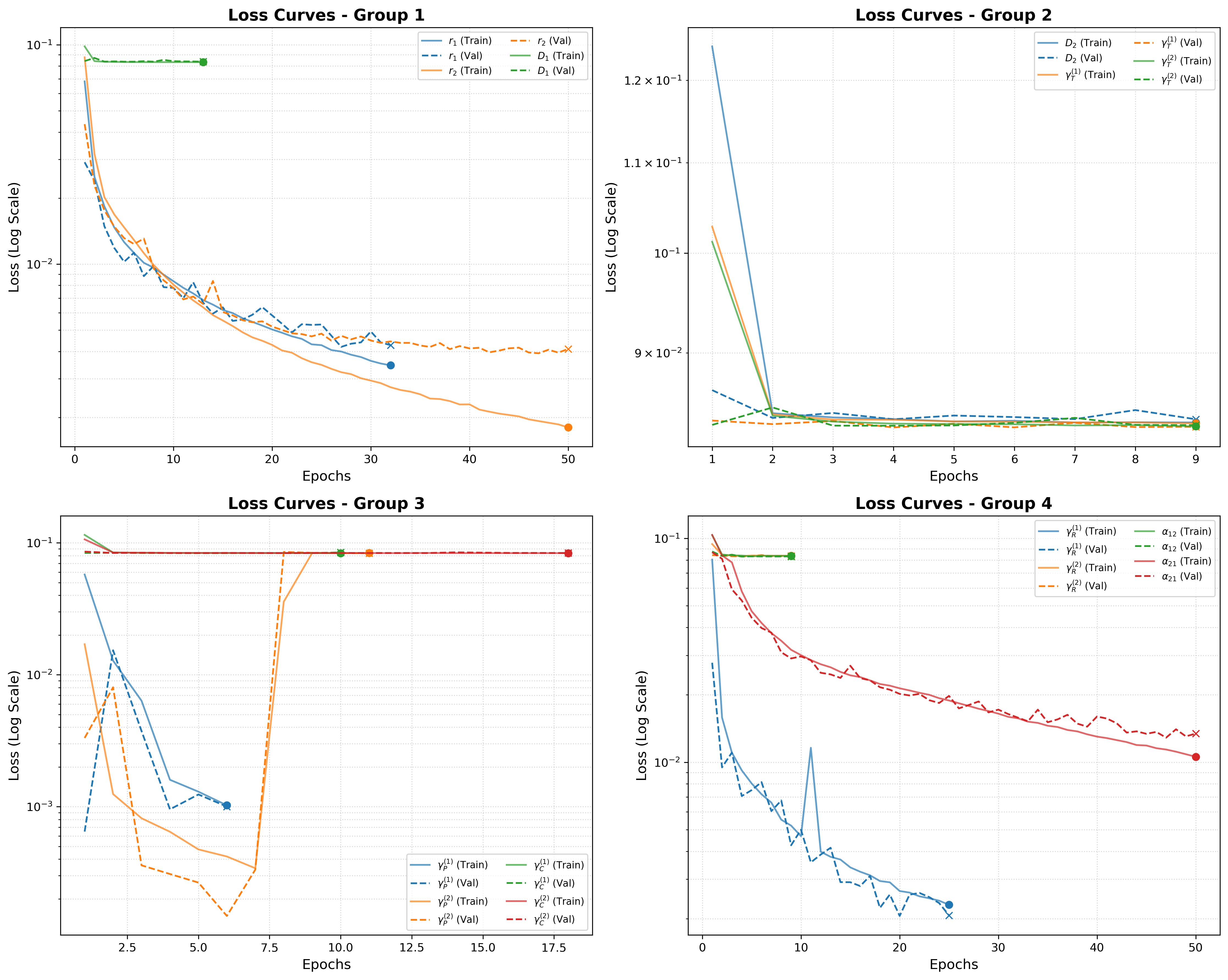}
\caption{Training and validation losses of Swin Transformer model on each parameter}
\label{training_loss_each}
\end{figure}

Finally, we apply the parameter-wise trained Swin Transformer models to the observed sample $\mathcal{B}$ in Figure~\ref{2samples}, yielding estimates of the fourteen unknown parameters governing the population dynamics. The inferred parameter values are
$$
(0.64,\,1.16,\,0.79,\,0.80,\,1.06,\,1.04,\,0.15,\,0.16,\,0.50,\,0.50,\,0.98,\,1.02,\,0.90,\,0.60).
$$
Using these estimates, we integrate Eq.~\eqref{population} forward in time from the initial condition $\mathcal{A}$. Figure~\ref{14_3640days_predicted} compares the predicted population densities with the ground-truth simulations at two time horizons: 14 days (one observation interval) and 3640 days (long-term evolution). The short-term predictions closely reproduce the observed spatial patterns, indicating that the inferred parameters capture the dominant local dynamics. In contrast, noticeable discrepancies emerge over long time scales, reflecting the intrinsic sensitivity of the nonlinear, spatially extended system to small parameter estimation errors. This behavior is expected for reaction–diffusion systems and highlights the practical limitation of long-term forecasting from extremely sparse observations.

\begin{figure}[H]
\centering
  \begin{minipage}{0.15\textwidth}
    \includegraphics[width=\linewidth]{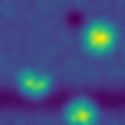}
  \end{minipage}
     \hspace{5pt}  
  \begin{minipage}{0.15\textwidth}
    \includegraphics[width=\linewidth]{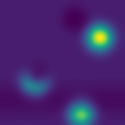}
  \end{minipage}
     \hspace{5pt}   
  \begin{minipage}{0.15\textwidth}
    \includegraphics[width=\linewidth]{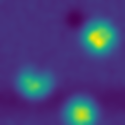}
  \end{minipage}
     \hspace{5pt}  
  \begin{minipage}{0.15\textwidth}
    \includegraphics[width=\linewidth]{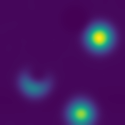}
  \end{minipage}    \\
  \begin{minipage}{0.15\textwidth}
    \includegraphics[width=\linewidth]{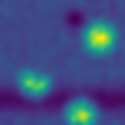}
  \end{minipage}
     \hspace{5pt}  
  \begin{minipage}{0.15\textwidth}
    \includegraphics[width=\linewidth]{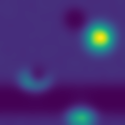}
  \end{minipage}
     \hspace{5pt}   
  \begin{minipage}{0.15\textwidth}
    \includegraphics[width=\linewidth]{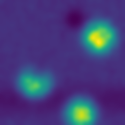}
  \end{minipage}
     \hspace{5pt}  
  \begin{minipage}{0.15\textwidth}
    \includegraphics[width=\linewidth]{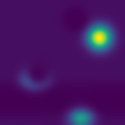}
  \end{minipage}
\caption{Comparison between true and predicted population densities at short- and long-term horizons. The upper row shows the ground-truth simulations, while the lower row shows the corresponding predictions obtained using parameters inferred from two sparse observations. Columns 1–2 correspond to Species~1 and columns 3–4 to Species~2. Within each species, the left and right panels show the distributions after 14 days and 3640 days, respectively. The close agreement at 14 days demonstrates accurate short-term predictive capability, whereas deviations at 3640 days highlight the sensitivity of long-term dynamics to parameter uncertainty.}
\label{14_3640days_predicted}
\end{figure}

\section{Conclusions}

We developed a unified spatiotemporal framework that integrates environmental forcing, behavioral cooperation, and spatial population dynamics to study the persistence and coexistence of competing species under compound stressors. By coupling reaction--diffusion population dynamics with dynamically evolving fields representing climate variability, pollution, resources, and cooperation, the model captures feedbacks between abiotic stress and social behavior that are difficult to resolve within static or non-spatial approaches.

Our results demonstrate that interacting environmental stressors and cooperation can drive the spontaneous emergence of persistent spatial organization, including stable dominance regions, sharp competitive boundaries, and asymmetric coexistence strategies. Spatial pattern formation provides an alternative mechanism for coexistence, allowing inferior competitors to persist through localized high-density clusters even when excluded at the global scale. Cooperation enhances population resilience but is highly sensitive to pollution, leading to spatially heterogeneous breakdowns of social buffering that reshape competitive outcomes.

Beyond mechanistic insight, we introduced a hybrid inverse modeling approach that enables parameter estimation from only two spatial population snapshots. By combining numerical simulations with deep learning, this framework bridges the gap between theoretical reaction--diffusion models and the sparse, snapshot-based data that characterize many ecological monitoring programs. While short-term spatial predictions are robust, longer-term divergence highlights the intrinsic sensitivity of nonlinear spatial systems and underscores fundamental limits to long-range forecasting.

Together, this work provides a general pipeline for linking sparse spatial observations to mechanistic eco--environmental dynamics. The framework is flexible and can be extended to additional stressors, alternative forms of behavioral interaction, or empirical spatial data. As ecological systems face increasing environmental variability and data limitations, such integrated modeling and inference approaches offer a promising path toward improved prediction of biodiversity dynamics and ecosystem resilience.

\end{document}